\newenvironment{proof}{\begin{IEEEproof}}{\end{IEEEproof}}
\newtheorem{theorem}{Theorem}[section]
\newtheorem{proposition}{Proposition}[section]
\newtheorem{lemma}{Lemma}[section]
\newtheorem{corollary}{Corollary}[section]
\long\def\symbolfootnote[#1]#2{\begingroup%
\def\thefootnote{\fnsymbol{footnote}}\footnote[#1]{#2}\endgroup}
\def\dref#1{(\ref{#1})}
\def\be{\begin{equation}} \def\ee{\end{equation}}
\def\ba{\begin{array}} \def\ea{\end{array}} \def\bna{\begin{eqnarray}}
\def\ena{\end{eqnarray}}
 \def\bna{\begin{eqnarray}}
\def\ena{\end{eqnarray}} \def\dref#1{(\ref{#1})}
\begin{document}

\title{Cut-Set Bound Is Loose for Gaussian Relay Networks}
\author{\authorblockN{Xiugang Wu and Ayfer \"{O}zg\"{u}r}
\authorblockA{
Department of Electrical Engineering\\
Stanford University, Stanford, CA 94305\\
Email: x23wu@stanford.edu; aozgur@stanford.edu} }

%\author{Xiugang Wu and Ayfer \"{O}zg\"{u}r
%
%\thanks{X. Wu and A. \"{O}zg\"{u}r are with the Department of Electrical Engineering, Stanford University, Stanford, CA 94305, USA (e-mail: x23wu@stanford.edu; aozgur@stanford.edu).}
%\thanks{L.-L. Xie is with the Department
%of Electrical and Computer Engineering, University of Waterloo, Waterloo, Ontario, Canada N2L 3G1 (e-mail: llxie@uwaterloo.ca).}
%}

\maketitle

\begin{abstract}
The cut-set bound developed by Cover and El Gamal in 1979 has since remained the best known upper bound on the capacity of the Gaussian relay channel. We develop a new upper bound on the capacity of the Gaussian primitive relay channel which is tighter than the cut-set bound. Our proof is based on typicality arguments and concentration of Gaussian measure. Combined with a simple tensorization argument proposed by Courtade and Ozgur in 2015, our result also implies that  the current capacity approximations for Gaussian relay networks, which have linear gap to the cut-set bound in the number of nodes, are order-optimal and leads to a lower bound on the pre-constant.
\end{abstract}

\section{Introduction}\label{S:Introduction}
The relay channel, and its Gaussian version in particular, models the communication scenario where a wireless link is assisted by a single relay. Motivated by the need to increase the spectral efficiency of wireless systems, characterizing the capacity of the Gaussian relay channel has been one of the central problems in information theory over the past couple of decades.

The single relay channel has been introduced by van der Meulen in \cite{van71} and the seminal work of Cover and El Gamal  in 1979 \cite{covelg79} has developed two basic achievability schemes for this setup, namely decode-and-forward and compress-and-forward, as well as an upper bound on its capacity, now known as the cut-set bound. Over the following 35 years, many new relaying strategies have been discovered such as amplify-and-forward, hash-and-forward, quantize-map-and-forward, compute-and-forward \cite{kramer,schein,KimAllerton,Avestimehretal,bobak} etc., however the cut-set bound has remained as the only upper bound on the capacity of the Gaussian relay channel. %For example, the recent approximation approach in \cite{Avestimehretal,OzgurDiggavi,Limetal} and follow-up work focus on bounding the gap of the achievable strategies to the cut-set bound of the network.
To our knowledge,  it is not even known if the cut-set bound is tight or not for this channel.

In this paper, we make progress on this problem by developing a new upper bound on the capacity of the Gaussian primitive relay channel.\footnote{For the sake of simplicity, in this paper we only focus on the symmetric case where the channels from the source to the relay and the destination have the same SNRs. Our arguments can be extended to the asymmetric case via channel simulation arguments.} This is a special case of the Gaussian single relay channel where the multiple access channel from the source and the relay to the destination has orthogonal components \cite{KimAllerton}. See Figure~\ref{F:GaussianRelay}. Here, the relay can be thought of as communicating to the destination over a Gaussian channel in a separate frequency band, or equivalently the destination can be thought of as equipped with two receive antennas, one directed to the source and one directed to the relay with no interference in between.\footnote{Note that due to  network equivalence, the rate limited channel from the relay to the destination in Figure~\ref{F:GaussianRelay} can be equivalently thought of  as a Gaussian channel of the same capacity \cite{Koetteretal}.} Our upper bound is tighter than the cut-set bound for this channel for all channel parameters. While this result is developed in the single-relay setting, it  has implications also for Gaussian networks with multiple relays. In particular, combined with a simple tensorization argument recently proposed in \cite{CourtadeOzgur}, it implies that the linear (in the number of nodes) gap to the cut-set bound in  current capacity approximations for Gaussian relay networks is fundamental. Indeed, the true capacity of Gaussian relay networks can have linear gap to the cut-set bound and our result can be used to obtain a lower bound on the pre-constant.

\begin{figure}[t!]
\centering
\includegraphics[width=0.3\textwidth]{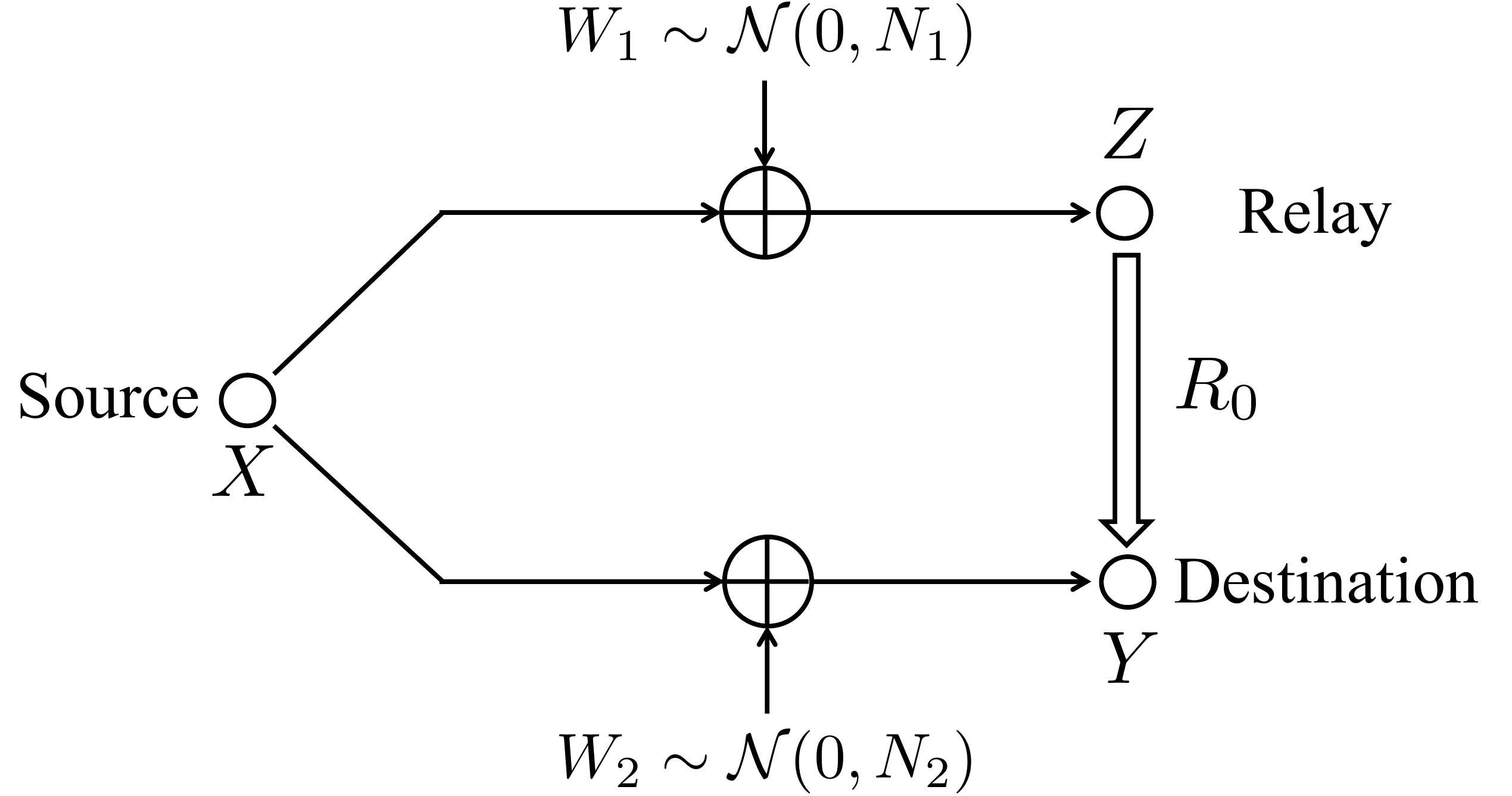}
\caption{Gaussian primitive relay channel.}
\label{F:GaussianRelay}
\end{figure}

Our upper bound builds on the approach we developed in our recent work \cite{WuXieOzgur_ISIT2015} for bounding the capacity of the discrete memoryless primitive relay channel. Similar to earlier bounds on the capacity of the discrete memoryless primitive relay channel \cite{Zhang, Xue}, the bound we developed in \cite{WuXieOzgur_ISIT2015} builds on the (generalized) blowing-up lemma, however unlike these earlier bounds does not critically rely on the finiteness of the alphabet size, which allows us to extend it to the Gaussian case in the current paper. Analogous to the results  for the discrete memoryless case \cite{Zhang, Xue,WuXieOzgur_ISIT2015}, a key ingredient of our upper bound for the Gaussian case is a Gaussian measure concentration result.

% due to Talagrand \cite{Talagrand}.
%Similar to earlier bounds on the capacity of the discrete memoryless primitive relay channel \cite{Zhang, Xue}, the bound we developped in \cite{WuXieOzgur_ISIT2015} builds on the (generalized) blowing-up lemma, however unlike these earlier bounds does not critically rely on the finiteness of the alphabet size, which allows us to extend it to the Gaussian case in the current paper. Analogous to the results  for the discrete memoryless case \cite{Zhang, Xue,WuXieOzgur_ISIT2015}, a key ingredient of our upper bound for the Gaussian case is a Gaussian measure concentration result.% due to Talagrand \cite{Talagrand}.

%
%The remainder of the paper is organized as the following. Section \ref{WuXieOzgur_ISIT2015} introduce the channel models, and the preliminaries on
%some techniques that will be used in the paper, respectively.  Section \ref{S:mainresults} presents our new bound for Gaussian primitive relay channels and sketch its proof. Section discusses the implication of our result on the problem of capacity approximation for general Gaussian relay networks.
%
%
%
%discusses the upper bounds for general symmetric primitive relay channels in detail,
%followed by a treatment on the binary symmetric channel case in Section \ref{S:BSC}. Sections \ref{S:sharpenproof}, \ref{S:novelproof} and \ref{S:furtherimprovement} are then dedicated to the proofs of our bounds. Finally, some concluding remarks are included in Section \ref{conclusion}.

\section{Preliminaries}\label{S:ChannelModel}

\subsection{Channel Model}
Consider a Gaussian primitive relay channel as depicted in Fig. \ref{F:GaussianRelay}, where $X\in\mathbb{R}$ denotes the source signal which is constrained to average power $P$, and $Z\in\mathbb{R}$ and $Y\in\mathbb{R}$ denote the received signals of the relay and the destination. We have
\begin{numcases}{}
Z=X+W_1\nonumber \\
Y=X+W_2 \nonumber
\end{numcases}
where $W_1$ and $W_2$ are Gaussian noises that are independent of each other and $X$, and have zero mean and variances $N_1$ and $N_2$ respectively. The relay can communicate to the destination via an error-free digital link of rate $R_0$.
%We refer to this channel as the Gaussian primitive relay channel.

For this channel, a code of rate $R$ and blocklength $n$, denoted by $$(\mathcal{C}_{(n,R)}, f_n(z^n), g_n(y^n,f_n(z^n))), \mbox{ or simply, } (\mathcal{C}_{(n,R)}, f_n, g_n), $$
consists of the following:
\begin{enumerate}
  \item A codebook at the source $X$,
$$\mathcal{C}_{(n,R)}=\{x^n (m), m\in \{1,2,\ldots, 2^{nR}\} \}$$
where $$\frac{1}{n}\sum_{i=1}^n  x_i^2 (m)\leq P, \ \forall m \in \{1,2,\ldots,2^{nR}\};$$
  \item An encoding function at the relay $Z$,
$$f_n: \mathbb R^n \rightarrow \{1,2,\ldots, 2^{nR_0}\};$$
  \item A decoding function at the destination $Y$,
$$g_n: \mathbb R^n  \times \{1,2,\ldots, 2^{nR_0}\}  \rightarrow \{1,2,\ldots, 2^{nR}\}.$$
\end{enumerate}

The average probability of error of the code is defined as
$$P_e^{(n)}=\mbox{Pr}(g_n(Y^n,f_n(Z^n)) \neq M ),$$
where the message $M$ is assumed to be uniformly drawn from the message set $ \{1,2,\ldots, 2^{nR}\}$. A rate $R$ is said to be achievable if there exists a sequence of codes
$$\{(\mathcal{C}_{(n,R)}, f_n, g_n)\}_{n=1}^{\infty}$$
such that the average probability of error $P_e^{(n)} \to 0$ as $n \to \infty$.
The capacity of the primitive relay channel is the supremum of all achievable rates, denoted by $C(R_0)$.

\subsection{The Cut-Set Bound}

For the Gaussian primitive relay channel, the cut-set bound can be stated as follows.

\begin{proposition}[Cut-set Bound]\label{P:cutset}
For the Gaussian primitive relay channel, if a rate $R$ is achievable, then there exists a random variable $X$ satisfying $E[X^2]\leq P$ such that
\begin{numcases}{}
 R   \leq I(X;Y,Z)\label{E: cut1} \\
R    \leq   I(X;Y)+R_0  \label{E: cut2}.
\end{numcases}
\end{proposition}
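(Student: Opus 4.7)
The plan is to follow the classical cut-set argument of Cover and El Gamal, specialized to the primitive setting where the relay-to-destination link is a noiseless digital pipe of rate $R_0$. Starting from an achievable rate $R$, I would fix a sequence of codes with $P_e^{(n)}\to 0$ and invoke Fano's inequality to write
\begin{equation}
nR \;\leq\; I(M;Y^n,f_n(Z^n)) + n\epsilon_n,\nonumber
\end{equation}
where $\epsilon_n\to 0$ as $n\to\infty$. The uniform choice of $M$ and the Markov chain $M\to X^n\to (Y^n,Z^n)$ (with $f_n(Z^n)$ a function of $Z^n$) will be the only structural facts I need from the code.

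For the broadcast cut \eqref{E: cut1}, I would weaken $f_n(Z^n)$ back to $Z^n$ and apply the data processing inequality to obtain $I(M;Y^n,f_n(Z^n))\leq I(X^n;Y^n,Z^n)$. I would then expand by chain rule,
\begin{equation}
I(X^n;Y^n,Z^n)=\sum_{i=1}^n\bigl[H(Y_i,Z_i|Y^{i-1},Z^{i-1}) - H(Y_i,Z_i|X_i)\bigr],\nonumber
\end{equation}
using that $(Y_i,Z_i)$ depends on the past only through $X_i$ by memorylessness of the Gaussian channel, and drop the conditioning on $(Y^{i-1},Z^{i-1})$ in the first term to bound the sum by $\sum_i I(X_i;Y_i,Z_i)$. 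For the MAC cut \eqref{E: cut2}, I would instead decompose $I(M;Y^n,f_n(Z^n))=I(M;Y^n)+I(M;f_n(Z^n)|Y^n)$ and bound the second term by $H(f_n(Z^n))\leq nR_0$; the first is again at most $I(X^n;Y^n)\leq\sum_i I(X_i;Y_i)$ by the same memoryless expansion.

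To single-letterize, I would introduce a time-sharing variable $Q$ uniform on $\{1,\dots,n\}$, independent of everything else, and set $X=X_Q$, $Y=Y_Q$, $Z=Z_Q$; then $\tfrac1n\sum_i I(X_i;Y_i,Z_i)=I(X;Y,Z|Q)$, and the per-codeword power constraint $\tfrac1n\sum_i x_i^2(m)\leq P$ together with uniformity of $M$ gives $E[X^2]\leq P$. Absorbing $Q$ by choosing its optimal value (or noting that for the Gaussian channel the Gaussian distribution is optimal and concavity in power handles the averaging) yields a single $X$ with $E[X^2]\leq P$ satisfying both inequalities, after letting $n\to\infty$ so that $\epsilon_n\to 0$.

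This is a standard argument and I do not expect any real obstacle: the continuity of the alphabet is harmless since all entropy manipulations are identities or use only non-negativity/data processing, and the power constraint transfers cleanly through the time-sharing step. The only mild care needed is in handling $I(M;f_n(Z^n)|Y^n)\leq H(f_n(Z^n))\leq nR_0$, which works because $f_n(Z^n)$ takes at most $2^{nR_0}$ values regardless of the continuous $Z^n$.
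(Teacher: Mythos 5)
Your proposal is correct and follows essentially the same route as the paper, which proves Proposition~\ref{P:cutset} en passant in Section~IV: Fano's inequality, data processing to $I(X^n;Y^n,Z^n)$ for the broadcast cut \eqref{E: cut1}, the decomposition $I(X^n;Y^n,f_n(Z^n))=I(X^n;Y^n)+H(f_n(Z^n)|Y^n)-H(f_n(Z^n)|X^n)\le I(X^n;Y^n)+nR_0$ for the MAC cut \eqref{E: cut2}, and single-letterization with a uniform time-sharing variable $Q$ and the power bound \eqref{eq:powerconst}. The one step you should tighten is the removal of $Q$: ``choosing its optimal value'' does not work, both because the same $X$ must satisfy the two inequalities simultaneously (the best $q$ for one cut need not be best for the other) and because $E[X_q^2]\le P$ can fail for an individual time index even though the average power is at most $P$; the paper instead keeps $X=X_Q$ unconditioned and uses $h(Y_Q,Z_Q|Q)\le h(Y_Q,Z_Q)$ together with $h(Y_Q,Z_Q|Q,X_Q)=h(Y_Q,Z_Q|X_Q)$ (the channel is memoryless and time invariant), which preserves $E[X_Q^2]\le P$. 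Your parenthetical alternative (Gaussian optimality plus concavity in the power) is also valid, but it really establishes Corollary~\ref{C:cutset} first and then exhibits $X\sim\mathcal{N}(0,P)$ as the witness for the proposition, which is slightly roundabout compared with the paper's direct argument.
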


It can be easily shown that both $I(X;Y,Z)$ and $I(X;Y)$ in Proposition \ref{P:cutset} are maximized when $X\sim\mathcal N (0,P)$, leading us to the following corollary.

\begin{corollary}\label{C:cutset}
For the  Gaussian primitive relay channel, if a rate $R$ is achievable, then
\begin{numcases}{}
 R   \leq \frac{1}{2}\log \left(1+\frac{P}{N_1}+\frac{P}{N_2}\right)  \\
R    \leq   \frac{1}{2}\log \left(1+\frac{P}{N_2}\right)+R_0  .
\end{numcases}
\end{corollary}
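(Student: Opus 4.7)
The plan is to deduce the corollary directly from Proposition~\ref{P:cutset} by showing that the right-hand sides of both (\ref{E: cut1}) and (\ref{E: cut2}) are maximized, subject to $E[X^2]\le P$, by choosing $X\sim\mathcal{N}(0,P)$. First I would reduce to zero-mean $X$: given any $X$ meeting the power constraint, replacing $X$ with $X-E[X]$ leaves both $I(X;Y)$ and $I(X;Y,Z)$ unchanged (since translating a random variable does not affect mutual information with a translate of itself) while only tightening the second-moment budget, so WLOG $E[X]=0$ and $E[X^2]=\Var(X)\le P$.

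For the second bound, I would use the standard scalar Gaussian channel argument: $I(X;Y)=h(Y)-h(Y\mid X)=h(Y)-h(W_2)$, and since $E[Y^2]\le E[X^2]+N_2\le P+N_2$, the maximum-entropy property of the Gaussian under a second-moment constraint gives $h(Y)\le \tfrac12\log(2\pi e(P+N_2))$; combined with $h(W_2)=\tfrac12\log(2\pi e N_2)$ this yields $I(X;Y)\le \tfrac12\log(1+P/N_2)$, which is exactly (\ref{E: cut2}).

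For the first bound, the analogous step is two-dimensional. I would write
\[
I(X;Y,Z)=h(Y,Z)-h(Y,Z\mid X)=h(Y,Z)-h(W_1,W_2),
\]
using that $(Y,Z)\mid X$ has the same law as $(W_1,W_2)$, and that independence of $W_1,W_2$ gives $h(W_1,W_2)=\tfrac12\log\bigl((2\pi e)^2 N_1N_2\bigr)$. The covariance matrix of $(Y,Z)$, under $E[X]=0$, is
\[
K_{YZ}=\begin{pmatrix} E[X^2]+N_2 & E[X^2] \\ E[X^2] & E[X^2]+N_1 \end{pmatrix},
\]
whose determinant equals $E[X^2](N_1+N_2)+N_1N_2\le P(N_1+N_2)+N_1N_2$. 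Applying the Gaussian maximum-entropy inequality for a prescribed covariance, $h(Y,Z)\le \tfrac12\log\bigl((2\pi e)^2\det K_{YZ}\bigr)$, and monotonicity of $\det K_{YZ}$ in $E[X^2]$, yields $I(X;Y,Z)\le \tfrac12\log\bigl(1+P/N_1+P/N_2\bigr)$, matching (\ref{E: cut1}).

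There is no genuine obstacle here; the only mild subtlety is recognizing that both the cross-covariance $E[YZ]=E[X^2]$ and the marginal variances are jointly controlled by the single parameter $E[X^2]$, so the optimization over the admissible covariance matrices collapses to maximizing a single increasing function of $E[X^2]$ on $[0,P]$, attained at $E[X^2]=P$ by the Gaussian input.
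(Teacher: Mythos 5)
Your proposal is correct and follows essentially the same route the paper intends: the paper simply asserts that both $I(X;Y,Z)$ and $I(X;Y)$ in Proposition~\ref{P:cutset} are maximized by $X\sim\mathcal N(0,P)$, and your maximum-entropy computation (with the covariance determinant $E[X^2](N_1+N_2)+N_1N_2$ for the joint term) is exactly the standard argument being invoked. The details, including the reduction to zero mean and the monotonicity in $E[X^2]$, are all sound.
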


%Also, denote by $C_{XY},C_{XZ}$ and $C_{XYZ}$ the capacities of the channels
%$X$-$Y$, $X$-$Z$, and $X$-$YZ$, respectively. Obviously, we have $C(0)=C_{XY}$ and $C(\infty)=C_{XYZ}$.

%\subsection{Symmetric Primitive Relay Channel}  In this paper, we focus on the symmetric case of the primitive relay channel, that is, when $Y$ and $Z$ are conditionally independent and identically distributed given $X$. Formally, a primitive relay channel is said to be symmetric if
%\begin{enumerate}
%  \item $p(y,z|x)=p(y|x)p(z|x)$,
%  \item $\Omega_Y = \Omega_Z:=\Omega$, and $\mbox{Pr}(Y=\omega|X=x)=\mbox{Pr}(Z=\omega|X=x)$ for any $\omega \in \Omega$ and $x \in \Omega_X$.
%\end{enumerate}%
%%i) $p_{_{  Y,Z|X} }(y,z|x)=p_{_{ Y|X}}(y|x)p_{_{Z|X}}(z|x)$, and ii) $\Omega_Y = \Omega_Z:=\Omega$, and $p_{_{Y|X}}(\omega|x)=p_{_{ Z|X}}(\omega|x)$ for any $\omega \in \Omega$ and $x \in \Omega_X$.
%In this case, we also use $p(\omega|x)$ to denote the transition probability of both the $X$-$Y$ and $X$-$Z$ channels.

\section{Main Result}\label{S:mainresults}

%\subsection{Symmetric Case ($N_1=N_2$)}
To simplify the exposition, in this paper we only concentrate on the symmetric case of the Gaussian primitive relay channel, that is, when $N_1=N_2=:N$. Our results can be extended to the asymmetric case by using channel simulation arguments. We defer this extension to the longer version of the paper. The following theorem states the main result of this paper.

\begin{theorem} \label{T:newboundsym}
For the symmetric Gaussian primitive relay channel, if a rate $R$ is achievable, then there exists a random variable $X$ satisfying $E[X^2]\leq P$ and some $a\in [0,R_0]$ such that
\begin{numcases}{}
 R   \leq I(X;Y,Z)\label{E: symnew1} \\
R    \leq   I(X;Y)+R_0  -a \label{E: symnew2}\\
R    \leq   I(X;Y)+a+\sqrt{2a\ln 2}\log e . \label{E: symnew3}
\end{numcases}
\end{theorem}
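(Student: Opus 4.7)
The plan is to combine Fano's inequality with Gaussian measure concentration, generalizing the technique the authors developed in the discrete memoryless case in \cite{WuXieOzgur_ISIT2015}. Let $J := f_n(Z^n)$ denote the relay message and $\epsilon_n \to 0$ the Fano term. By Fano's inequality, $nR \leq I(M; Y^n, J) + n\epsilon_n$ for any achievable rate $R$. Bound (\ref{E: symnew1}) is immediate: $I(M; Y^n, J) \leq n I(X; Y, Z)$ with $X \sim \mathcal{N}(0, P)$, which is standard Gaussian maximum entropy applied to the covariance matrix of $(X, Y, Z)$. For the remaining two bounds I would use the decomposition
\[
I(M; Y^n, J) = I(M; Y^n) + I(M; J \mid Y^n) \leq n I(X; Y) + I(M; J \mid Y^n),
\]
which reduces everything to producing two complementary upper bounds on $I(M; J \mid Y^n)/n$ parameterized by the same $a \in [0, R_0]$ chosen from the code.

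Bound (\ref{E: symnew2}) is the easier direction. For instance, defining $a$ by $H(J \mid Y^n) =: n(R_0 - a)$, which is admissible since $H(J \mid Y^n) \in [0, nR_0]$, one obtains $I(M; J \mid Y^n) \leq H(J \mid Y^n) = n(R_0 - a)$ at once, giving (\ref{E: symnew2}). This routine step produces a whole family of "modified cut-set" bounds parameterized by the code.

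The main obstacle is the complementary bound (\ref{E: symnew3}), which requires $I(M; J \mid Y^n) \leq na + n\sqrt{2a \ln 2}\,\log e + o(n)$ with the same $a$, and this is where Gaussian measure concentration enters. The key geometric object is the partition of $\mathbb{R}^n$ into relay bins $B_j := f_n^{-1}(j)$, and the key probabilistic observation is that, conditioned on $M = m$, the vector $Z^n$ is Gaussian $\mathcal{N}(x^n(m), N I_n)$. The entropy constraint introduced above forces many bins $B_j$ to carry significant conditional mass, so that Gaussian isoperimetry (the Gaussian analogue of the blowing-up lemma) applies to each such bin: its $r$-neighborhood $B_j^{(r)}$, with $r$ on the Gaussian scale dictated by the concentration exponent $\sqrt{2a\ln 2 \cdot N}$, captures essentially all of the conditional mass. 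One then uses these blown-up bins to construct an auxiliary "enhanced" decoder whose uncertainty about $M$ given $(Y^n, J)$ can be related, via a standard Markov-type entropy manipulation, to a list of size $\approx 2^{na}$ together with a Gaussian concentration slack; this slack is precisely the $n\sqrt{2a \ln 2}\,\log e$ term, with the $\log e$ factor converting the Gaussian concentration exponent from nats to bits. The delicate point -- and the place where the Gaussian argument genuinely departs from the Hamming-based argument in \cite{WuXieOzgur_ISIT2015} -- is executing this geometric-to-entropic conversion cleanly so that the exact constants in (\ref{E: symnew3}) are recovered, while carefully averaging over $m$ and $y^n$ to pass from typical-set statements to the unconditional mutual information.
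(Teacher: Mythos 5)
There is a genuine gap, and it sits exactly at the point where you parameterize the two bounds. You define $a$ through the destination's side, $H(J\mid Y^n)=:n(R_0-a)$, because that makes \eqref{E: symnew2} immediate; but the Gaussian-isoperimetry step you then sketch for \eqref{E: symnew3} needs control of the bins' conditional mass under the law of $Z^n$ given the transmitted codeword, and that is governed by $H(J\mid X^n)$, not by $H(J\mid Y^n)$. Your assertion that ``the entropy constraint introduced above forces many bins $B_j$ to carry significant conditional mass'' does not follow from your definition of $a$: since $J\!-\!X^n\!-\!Y^n$ is Markov, $H(J\mid X^n)=H(J\mid X^n,Y^n)\le H(J\mid Y^n)$, so $H(J\mid Y^n)$ being close to $nR_0$ (your $a\approx 0$) is perfectly compatible with every bin having conditional probability as small as $2^{-nR_0}$ given $X^n$ (think of a compress-and-forward style quantizer at the relay). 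In that regime the blow-up radius dictated by your $a$ is essentially zero, the blown-up bins need not capture the conditional mass of $Y^n$, and the intermediate inequality you need, $I(M;J\mid Y^n)\le na+n\sqrt{2a\ln 2}\,\log e+o(n)$ with your $a$, has no supporting argument (and there is no reason to believe it holds for such codes). The fix is the choice the paper makes: set $a:=\tfrac1n H(J\mid X^n)$. With this $a$, \eqref{E: symnew2} is still easy, since $I(X^n;J\mid Y^n)\le H(J)-H(J\mid X^n,Y^n)=H(J)-H(J\mid X^n)\le n(R_0-a)$, and the concentration step becomes available because typicality gives $\Pr(Z^n\in f_n^{-1}(i)\cap T\mid x^n)\gtrsim 2^{-n(a+\epsilon)}$, which is exactly what the Gaussian blowing-up lemma (Lemma~\ref{L:Talagrand}) needs to justify the radius $\sqrt{2Na\ln 2}$.

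Two further points you would still have to supply even after fixing the definition of $a$. First, the per-bin probability statement is an average-entropy statement at blocklength $n$; the paper makes it a high-probability statement by passing to $B$-fold i.i.d.\ extensions of $(X^n,Y^n,Z^n,I_n)$ and building customized high-probability sets (strong typicality is unavailable in the Gaussian/continuous setting), and the symmetry $N_1=N_2$ is invoked explicitly to transfer the blown-up-bin coverage from $Z^n$ to $Y^n$ -- a step your sketch uses implicitly but never states. Second, your ``enhanced decoder / list of size $2^{na}$'' conversion is only named, not executed; the paper's actual conversion is different and is where the constants come from: it lower-bounds the density $f(\mathbf y\mid \mathbf i)$ for typical $\mathbf y$ by summing over the roughly $2^{nBc_n}$ codeword blocks compatible with a nearby $\mathbf z$ (triangle inequality giving $d(\mathbf x,\mathbf y)\le\sqrt{nB}(\sqrt N+\sqrt{2Na\ln 2})$, whence the $a+\sqrt{2a\ln 2}\log e$ exponent), weights them by $p(\mathbf x\mid\mathbf i)\approx 2^{-nBb_n}$, and thereby upper-bounds $h(Y^n\mid I_n)$ as in \eqref{eq:upperbound}, which is then combined with $nR\le I(X^n;I_n)+I(X^n;Y^n\mid I_n)$. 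As written, your proposal proves \eqref{E: symnew1} and a version of \eqref{E: symnew2}, but the pivotal bound \eqref{E: symnew3} is not established with the same $a$, so the theorem does not follow.
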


As in the case of the cut-set bound, since both $I(X;Y,Z)$ and $I(X;Y)$ in Theorem \ref{T:newboundsym} are maximized when $X\sim\mathcal N (0,P)$, we have the following corollary.
\begin{corollary}\label{C:newboundsym}
For the symmetric Gaussian primitive relay channel, if a rate $R$ is achievable, then there exists some $a\in [0,R_0]$ such that
\begin{numcases}{}
 R   \leq \frac{1}{2}\log \left(1+\frac{2P}{N}\right)\label{E:newboundsym1}  \\
R    \leq   \frac{1}{2}\log \left(1+\frac{P}{N}\right)+R_0  -a  \label{E:newboundsym2} \\
R    \leq   \frac{1}{2}\log \left(1+\frac{P}{N}\right)+a+\sqrt{2a\ln2}\log e. \label{E:newboundsym3}
\end{numcases}
\end{corollary}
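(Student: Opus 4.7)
The plan is to start from Fano's inequality, dispatch the cut-set-type bound (\ref{E: symnew1}) by standard data processing, and then sharpen the $R_0$-term needed for (\ref{E: symnew2}) and (\ref{E: symnew3}) through a Gaussian measure concentration argument that exploits the channel symmetry. From Fano applied to a sequence of codes with vanishing $P_e^{(n)}$, one has $nR \leq I(M;Y^n,T)+n\epsilon_n$, where $T=f_n(Z^n)$. Bound (\ref{E: symnew1}) follows at once via $I(M;Y^n,T)\leq I(X^n;Y^n,Z^n)$ and the usual single-letterization with a Gaussian input. For the other two bounds, I would decompose $I(M;Y^n,T)=I(M;Y^n)+I(M;T\mid Y^n)$, control the first term by $nI(X;Y)$, and then bound $I(M;T\mid Y^n)$ in two complementary ways through a single parameter $a\in[0,R_0]$ determined by the code.

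The central ingredient is the symmetry $Y^n\mid X^n \stackrel{d}{=} Z^n\mid X^n$, valid because $N_1=N_2=N$ and the noises are i.i.d.\ Gaussian. Let $P_t=f_n^{-1}(t)\subseteq\mathbb{R}^n$ denote the pre-image of relay index $t$, so that $\Pr(Y^n\in P_t\mid X^n=x^n)=\Pr(Z^n\in P_t\mid X^n=x^n)$ for every codeword $x^n$. Applying the Gaussian isoperimetric inequality to the measure $\mathcal{N}(x^n,NI)$ yields
\[
\Pr\bigl(Y^n\in P_t^{(r)} \,\big|\, X^n=x^n\bigr)\;\geq\;\Phi\bigl(\Phi^{-1}(p)+r/\sqrt{N}\bigr),
\]
where $P_t^{(r)}$ is the Euclidean $r$-enlargement and $p=\Pr(Y^n\in P_t\mid X^n=x^n)$. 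Setting $r=\sqrt{2Nna\ln 2}$ makes $P_t^{(r)}$ capture $Y^n$ with probability tending to one whenever $p\gtrsim 2^{-na}$. I would then take $a\in[0,R_0]$ to be the smallest value for which the ``candidate set'' $S(Y^n):=\{t:Y^n\in P_t^{(r)}\}$ contains the true $T$ with probability tending to one. Representing $T$ as a pair $(S(Y^n),J)$, with $J$ the index of $T$ inside $S(Y^n)$, the first coordinate is a function of $Y^n$, while a volumetric/averaging bound controls $|S(Y^n)|\lesssim 2^{na}$ in a suitable averaged sense. This gives
\[
I(M;T\mid Y^n)\;\leq\;na+n\sqrt{2a\ln 2}\log e+o(n),
\]
where the concentration-gap term comes from translating the Euclidean blowup radius $r/\sqrt{N}=\sqrt{2na\ln 2}$ into per-symbol bit units; this establishes (\ref{E: symnew3}). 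The same representation gives $H(T\mid Y^n)\leq n(R_0-a)+o(n)$, since after $Y^n$ is known only the pointer $J$ remains undetermined; combined with $I(M;T\mid Y^n)\leq H(T\mid Y^n)$, this yields (\ref{E: symnew2}).

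The main obstacle I anticipate is making the candidate-set estimate quantitative: $|S(y^n)|$ is $y^n$-dependent, and the bound $|S(Y^n)|\lesssim 2^{na}$ must hold in an averaged sense that dovetails precisely with the Gaussian isoperimetric probability estimates, so that a single code-dependent $a$ simultaneously satisfies (\ref{E: symnew2}) and (\ref{E: symnew3}). Careful bookkeeping of constants in converting the Euclidean radius $\sqrt{2Nna\ln 2}$ into a rate loss—so that the correction term emerges as exactly $\sqrt{2a\ln 2}\log e$ rather than some other function of $a$—will be the most delicate step. Single-letterization $I(X^n;Y^n)\leq nI(X;Y)$ with a Gaussian maximizer then finishes the argument for the same $X$ and $a$, yielding Corollary~\ref{C:newboundsym} upon optimizing over $X\sim\mathcal{N}(0,P)$.
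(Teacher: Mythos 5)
Your overall architecture (Fano, splitting off $I(M;T\mid Y^n)$ with $T=f_n(Z^n)$, Gaussian isoperimetry exploiting $Y^n\mid X^n\stackrel{d}{=}Z^n\mid X^n$, then Gaussian maximization to pass to the corollary) points in the right direction, but the step that is supposed to deliver the two complementary bounds \eqref{E:newboundsym2} and \eqref{E:newboundsym3} from a single candidate-set representation does not work as stated. Your claim $H(T\mid Y^n)\le n(R_0-a)+o(n)$ has no mechanism behind it: in your representation $T=(S(Y^n),J)$ with $S(Y^n)$ a function of $Y^n$, one only gets $H(T\mid Y^n)\le \E[\log|S(Y^n)|]$, which under your own list-size claim is roughly $na$ (plus corrections), not $n(R_0-a)$; the same list bound cannot simultaneously yield $a+\sqrt{2a\ln 2}\log e$ for one inequality and $R_0-a$ for the other. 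In the paper, \eqref{E:newboundsym2} is not an isoperimetric or list statement at all: the parameter is defined as $a_n=\frac{1}{n}H(I_n\mid X^n)$, and the MAC-side bound follows from $I(X^n;I_n\mid Y^n)=H(I_n\mid Y^n)-H(I_n\mid X^n,Y^n)\le nR_0-na_n$, where $H(I_n\mid X^n,Y^n)=H(I_n\mid X^n)$ by the conditional independence of $Y^n$ and $Z^n$ given $X^n$. The subtracted $a$ is the relay index's residual entropy given the transmitted codeword, not a pointer length given $Y^n$.

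Second, the quantitative heart of your sketch, $|S(Y^n)|\lesssim 2^{na}$, is unsupported and in general false: the preimages $f_n^{-1}(t)$ are arbitrary sets, and exponentially many thin, interleaved cells can place their $r$-enlargements over a typical $y^n$ even when $a\approx 0$, so no volumetric argument controls the list size. Moreover, for a fixed $n$-letter code the probabilities $\Pr(Z^n\in P_t\mid X^n=x^n)$ need not concentrate near $2^{-nH(I_n\mid X^n)/n}$, since there is no i.i.d. structure within the block; the paper handles this by taking $B$-fold i.i.d. repetitions of $(X^n,Y^n,Z^n,I_n)$ and defining customized high-probability sets. It then avoids counting candidate indices altogether: for typical $(\mathbf x,\mathbf i)$ and any $\mathbf y$ in the blown-up set $\Gamma_{\sqrt{nB}(\sqrt{2Na_n\ln 2}+3\sqrt{N\epsilon})}(\mathcal Z_{(\mathbf x,\mathbf i)})$, about $2^{nBc_n}$ codewords $\mathbf x'$ are typical with a witness $\mathbf z$, each with $p(\mathbf x'\mid\mathbf i)\ge 2^{-nB(b_n+\epsilon)}$ and $f(\mathbf y\mid\mathbf x')\ge 2^{-nB(\frac12\log 2\pi eN+a_n+\sqrt{2a_n\ln 2}\log e+\epsilon)}$ because $d(\mathbf x',\mathbf y)\le\sqrt{nB}\bigl(\sqrt N+\sqrt{2Na_n\ln 2}+\epsilon_1\bigr)$; squaring this distance in the Gaussian exponent is exactly what produces the cross term $\sqrt{2a_n\ln 2}\log e$, so it is not a mere unit conversion of the blow-up radius. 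This density lower bound gives $h(Y^n\mid I_n)\le n\bigl(b_n-c_n+\frac12\log 2\pi eN+a_n+\sqrt{2a_n\ln 2}\log e\bigr)$, hence \eqref{E:newboundsym3}. To repair your proposal you would need to redefine $a$ as $\frac1n H(I_n\mid X^n)$ (so that \eqref{E:newboundsym2} follows for the same parameter) and replace the list-counting step by a density/entropy argument of this type, or else prove a list-size bound that you currently do not have.
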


Note that in the symmetric case, by Corollary \ref{C:cutset}, the cut-set bound says that if a rate $R$ is achievable, then
\begin{numcases}{}
 R   \leq \frac{1}{2}\log \left(1+\frac{2P}{N}\right)  \label{E:cutsetsym1} \\
R    \leq   \frac{1}{2}\log \left(1+\frac{P}{N}\right)+R_0.\label{E:cutsetsym2}
\end{numcases}

Clearly the bound on $R$ in Corollary~\ref{C:newboundsym} is tighter than the cut-set bound since \eqref{E:newboundsym2} will only reduce to \eqref{E:cutsetsym2} if $a=0$. However, if $a=0$ then \eqref{E:newboundsym3} will constrain the rate $R$ by the capacity of the source-destination link. The constraint on $R$, jointly imposed by \eqref{E:newboundsym2} and \eqref{E:newboundsym3} can be found by equating them to yield
\begin{align}
R_0=2a^*+ \sqrt{2a^*\ln2} \log e. \label{E:solvegap}
\end{align}
Corollary~\ref{C:newboundsym} can be restated in terms of $a^*$ as follows: if a rate $R$ is achievable, then
\begin{numcases}{}
R   \leq \frac{1}{2}\log \left(1+\frac{2P}{N}\right)  \nonumber \\
R    \leq   \frac{1}{2}\log \left(1+\frac{P}{N}\right)+R_0-a^*. \nonumber
\end{numcases}

Note that both the cut-set bound and our new bound depend on the channel parameters through $\frac{P}{N}$ and $R_0$. It is interesting to evaluate the largest gap between these two bounds over all parameter values for the symmetric Gaussian primitive relay channel.
For this, it can be shown that when $\frac{P}{N}\rightarrow\infty$ and $R_0=0.5$, the gap takes its largest value and is given by the solution of equation \dref{E:solvegap}, which is $a^*=0.0535$.
%Particularly, it is not difficult to show that the gap takes its largest value $0.0392$ when $\frac{P}{N}\rightarrow\infty$ and $R_0=0.5$.
We formally summarize this observation in the following proposition.

\begin{proposition}\label{P:largestpossible}
 Let $\Delta\left(\frac{P}{N},R_0\right)$ denote the gap between the two bounds and $\Delta^*$ its largest possible  value over all symmetric Gaussian primitive relay channels, i.e.,
$$\Delta^*:=\sup_{\frac{P}{N}, R_0} \Delta\left(\frac{P}{N},R_0\right).$$
Then, $\Delta^* =  \Delta(\infty,0.5)=0.0535$.
\end{proposition}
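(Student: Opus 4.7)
The plan is to characterize $\Delta(P/N, R_0)$ explicitly through a case analysis and then optimize over the two parameters. Abbreviate $B_1 := \tfrac{1}{2}\log(1+2P/N)$ and $B_2 := \tfrac{1}{2}\log(1+P/N)+R_0$, so that the cut-set bound reads $R \le \min(B_1, B_2)$. Let $B_2^* := B_2 - a^*$, where $a^* = a^*(R_0)$ is determined implicitly by $R_0 = 2a^* + \sqrt{2a^*\ln 2}\,\log e$; by Corollary~\ref{C:newboundsym} the new bound reads $R \le \min(B_1, B_2^*)$, so $\Delta(P/N,R_0) = \min(B_1,B_2) - \min(B_1,B_2^*)$. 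Introducing the ``MAC advantage'' $\gamma := B_1 - \tfrac{1}{2}\log(1+P/N) = \tfrac{1}{2}\log\frac{1+2P/N}{1+P/N}$, which is strictly increasing in $P/N$ with $\gamma \in [0,1/2)$ and $\gamma \to 1/2$ as $P/N \to \infty$, comparing $B_2$ and $B_2^*$ with $B_1$ yields
\[
\Delta(P/N, R_0) =
\begin{cases}
a^*, & R_0 \le \gamma, \\
\gamma + a^* - R_0, & \gamma < R_0 < \gamma + a^*, \\
0, & R_0 \ge \gamma + a^*.
\end{cases}
\]

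Next, I would fix $P/N$ (hence $\gamma$) and optimize over $R_0$. In the first regime $\Delta = a^*(R_0)$ is strictly increasing in $R_0$, so it is maximized as $R_0 \uparrow \gamma$, giving the value $a^*(\gamma)$. In the second regime, implicit differentiation of the defining equation for $a^*$ gives
\[
\frac{d a^*}{d R_0} = \left(2 + \log e\,\sqrt{\ln 2/(2a^*)}\right)^{-1} \le \tfrac{1}{2},
\]
so $\gamma + a^*(R_0) - R_0$ is strictly decreasing in $R_0$ on this interval; thus $\Delta$ is dominated there by its boundary value at $R_0 = \gamma$, which is again $a^*(\gamma)$. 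Combining the two regimes, $\sup_{R_0 \ge 0}\Delta(P/N, R_0) = a^*(\gamma(P/N))$.

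Finally, since $a^*$ is strictly increasing in its argument and $\gamma(P/N)\uparrow 1/2$ as $P/N\to\infty$, the overall supremum equals $a^*(1/2)$, approached in the limit $P/N\to\infty$, $R_0\to 1/2$. Substituting $R_0 = 1/2$ in the defining equation and solving the resulting quadratic in $x := \sqrt{2a^*/\ln 2}$, namely $x^2\ln 2 + x - 1/2 = 0$, yields $x = (-1+\sqrt{1+2\ln 2})/(2\ln 2)$ and $a^* = x^2\ln 2/2 \approx 0.0535$, establishing $\Delta^* = \Delta(\infty,0.5) = 0.0535$. The one slightly delicate step is the derivative estimate $da^*/dR_0 \le 1/2$, which is what rules out any benefit from pushing $R_0$ past $\gamma$; the rest is routine monotonicity bookkeeping followed by a quadratic solve.
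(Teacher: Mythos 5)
Your proposal is correct and takes essentially the route the paper only sketches: the paper merely asserts ``it can be shown'' that the gap is maximized at $\tfrac{P}{N}\to\infty$, $R_0=0.5$ and equals the solution of \eqref{E:solvegap}, and your piecewise formula for $\Delta$ in terms of $\gamma=\tfrac{1}{2}\log\tfrac{1+2P/N}{1+P/N}$, the monotonicity of $a^*(R_0)$, and the bound $da^*/dR_0\le\tfrac12$ supply exactly the omitted optimization details. The concluding quadratic in $x=\sqrt{2a^*/\ln 2}$ correctly yields $a^*\approx 0.0535$, matching the stated value $\Delta^*=\Delta(\infty,0.5)$.
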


%
%\subsection{Asymmetric Case ($N_1\neq N_2$)}
%
%In the general case when $N_1\neq N_2$, our new bound is given in the following theorem.
%
%\begin{theorem} \label{T:newbound}
%For the general Gaussian primitive relay channel, if a rate $R$ is achievable, then there exists some $f(x)$ satisfying $E[X^2]\leq P$ such that
%\begin{numcases}{}
% R   \leq I(X;Y,Z)\label{E: new1} \\
%R    \leq   I(X;Y)+R_0  -a \label{E: new2}\\
%R    \leq   I(X;Z)+(a+\sqrt{2a})\log e . \label{E: new3}
%\end{numcases}
%\end{theorem}
%
%With the optimal input distribution $X\sim\mathcal N (0,P)$, we have the following corollary.
%\begin{corollary}\label{C:newbound}
%For the general Gaussian primitive relay channel, if a rate $R$ is achievable, then
%\begin{numcases}{}
% R   \leq \frac{1}{2}\log \left(1+\frac{P}{N_1}+\frac{P}{N_2}\right)\label{E:newbound1}  \\
%R    \leq   \frac{1}{2}\log \left(1+\frac{P}{N_2}\right)+R_0  -a  \label{E:newbound2} \\
%R    \leq   \frac{1}{2}\log \left(1+\frac{P}{N_1}\right)+\frac{N_1}{N_2}(a+\sqrt{2a})\log e \nonumber\\
%~~~~~~~~~~~~+\sqrt{ \frac{N_1}{N_2} \left| \frac{N_1}{N_2}-1 \right|} (a+\sqrt{2a})\log e    \label{E:newbound3}
%\end{numcases}
%\end{corollary}
%
%It can be easily seen that Corollary \ref{C:newbound} reduces to Corollary \ref{C:newboundsym} when $N_1=N_2=N$.

\subsection{Gaussian Relay Networks}\label{S:lineargap}

While the setup we consider in this paper can be regarded as a special case of a Gaussian relay network, the upper bound we develop for this special case can be used to infer how tightly the capacity of general Gaussian relay networks can be approximated by the cut-set bound. Initiated by the work of Avestimehr, Diggavi and Tse \cite{Avestimehretal}, there has been significant recent interest \cite{OzgurDiggavi,Limetal} in approximating the capacity of general Gaussian relay networks with the cut-set bound, i.e. bounding the gap between the rates achieved by specific schemes and the cut-set bound on capacity. The gap in these approximation results is linear in the number of nodes in the network but independent of the channel SNRs and network topology. In particular, the best currently known approximation result \cite{DDF} has a gap of $0.5N$ where $N$ is the total number of nodes. While some recent works \cite{Urs1,Urs2,Bobbie,Ritesh} demonstrate sublinear in the number of nodes (or in the total number of antennas in the case of multiple antenna nodes) gap to the cut-set bound for specific topologies, a recent tensorization argument proposed in \cite{CourtadeOzgur} shows that the gap between the capacity and the cut-set bound can be bounded by a sublinear function of the number of nodes, independent of network topology
and channel configurations, if, and only if, capacity is equal to
the cut-set bound for \emph{all} Gaussian relay networks. %Combined with the result of the current paper, which shows that the cut-set bound is not tight for one specific Gaussian network, this shows that the linear gap in these approximation results is indeed fundamental.
Moreover, Theorem~3 of \cite{CourtadeOzgur} implies that an explicit gap to the cut-set bound for any specific network with specific channel parameters and topology would imply a lower bound on the preconstant in these approximation results. In particular, the gap $0.0535$ in Proposition~\ref{P:largestpossible} for the Gaussian primitive relay channel (which can be thought of as a Gaussian network with two receive antennas at the destination, so four antennas in total) implies that the capacity of Gaussian relay networks can not be approximated by the cutset bound, independent of the topology and channel coefficients, with a gap that is smaller than $(0.0535/4)N\approx 0.01 N$.

\section{Proof of Theorem~\ref{T:newboundsym}}
In this section, we prove bounds \dref{E: symnew1}--\dref{E: symnew3} sequentially with the focus on showing \dref{E: symnew3}.

Suppose a rate $R$ is achievable. Then there exists a sequence of codes
\begin{align}
\{(\mathcal{C}_{(n,R)}, f_n, g_n)\}_{n=1}^{\infty}  \label{E:ReliableCodes}
\end{align}
such that the average probability of error $P_e^{(n)} \to 0$ as $n \to \infty$.

For this sequence of codes, we have
\begin{align}
n R &= H(M)\nonumber \\
&=I(M;Y^n,Z^n)+H(M|Y^n,Z^n)\nonumber \\
&\leq I(X^n;Y^n,Z^n)+H(M|Y^n,f_n(Z^n))\nonumber \\
&\leq I (X^n;Y^n,Z^n)+n \mu  \label{E:GaussianFano} \\
&= h(Y^n,Z^n)- h(Y^n,Z^n|X^n)+n  \mu   \nonumber \\
&= \sum_{i}^{n} [h(Y_i,Z_i|Y^{i-1},Z^{i-1})- h(Y_i,Z_i|X_i)] +n \mu  \nonumber \\
&\leq  \sum_{i}^{n} [h(Y_i,Z_i)- h(Y_i,Z_i|X_i)] +n \mu  \nonumber \\
&=  \sum_{i}^{n} I (X_i;Y_i,Z_i) +n \mu   \nonumber \\
&= n(I(X_Q;Y_Q,Z_Q|Q)+ \mu ) \label{E:Timesharing}\\
&= n(h(Y_Q,Z_Q|Q)- h(Y_Q,Z_Q|Q,X_Q)  +\mu ) \nonumber\\
&\leq  n(h(Y_Q,Z_Q)- h(Y_Q,Z_Q|X_Q)  + \mu ) \nonumber\\
&= n(I(X_Q;Y_Q,Z_Q)+ \mu ) \nonumber
\end{align}
i.e.,
\begin{align}
R\leq  I(X_Q;Y_Q,Z_Q)+ \mu  \label{E:summarize1}
\end{align}
for any $\mu  >0$ and sufficiently large $n$, where \dref{E:GaussianFano} follows from Fano's inequality,  \dref{E:Timesharing} follows by defining the time sharing random variable $Q$ to be uniformly distributed over $[1:n]$, and
\begin{align}\label{eq:powerconst}
E[X^2_Q]=  \frac{1}{n}\sum_{i}^{n} E[X^2_i]  =  \frac{1}{n} E\left[\sum_{i}^{n}X^2_i\right]  \leq P.
\end{align}

Moreover, for any $\mu >0$ and sufficiently large $n$,
\begin{align}
n R &=  H(M)\nonumber \\
&=I(M;Y^n,f_n(Z^n))+H(M|Y^n,f_n(Z^n))\nonumber \\
&\leq I (X^n;Y^n,f_n(Z^n))+n\mu  \label{eq:int} \\
&=I(X^n;Y^n)+I(X^n;f_n(Z^n)|Y^n)+ n\mu  \nonumber \\
&=I(X^n;Y^n)+H(f_n(Z^n)|Y^n)-H(f_n(Z^n)|X^n)+ n\mu  \nonumber \\
&\leq n(I(X_Q;Y_Q)+R_0 -a_n+ \mu ), \end{align}
i.e.,
\begin{align}
 R \leq  I(X_Q;Y_Q)+R_0 -a_n+ \mu ,  \label{E:summarize2}
\end{align}
where $a_n=\frac{1}{n}H(I_n|X^n)$ with $I_n:=f_n(Z^n)$, and $a_n$ satisfies
\begin{align}\label{E:constraint on a}
0 \leq a_n \leq R_0.
\end{align}

\begin{figure}
\centering
\includegraphics[width=0.25\textwidth]{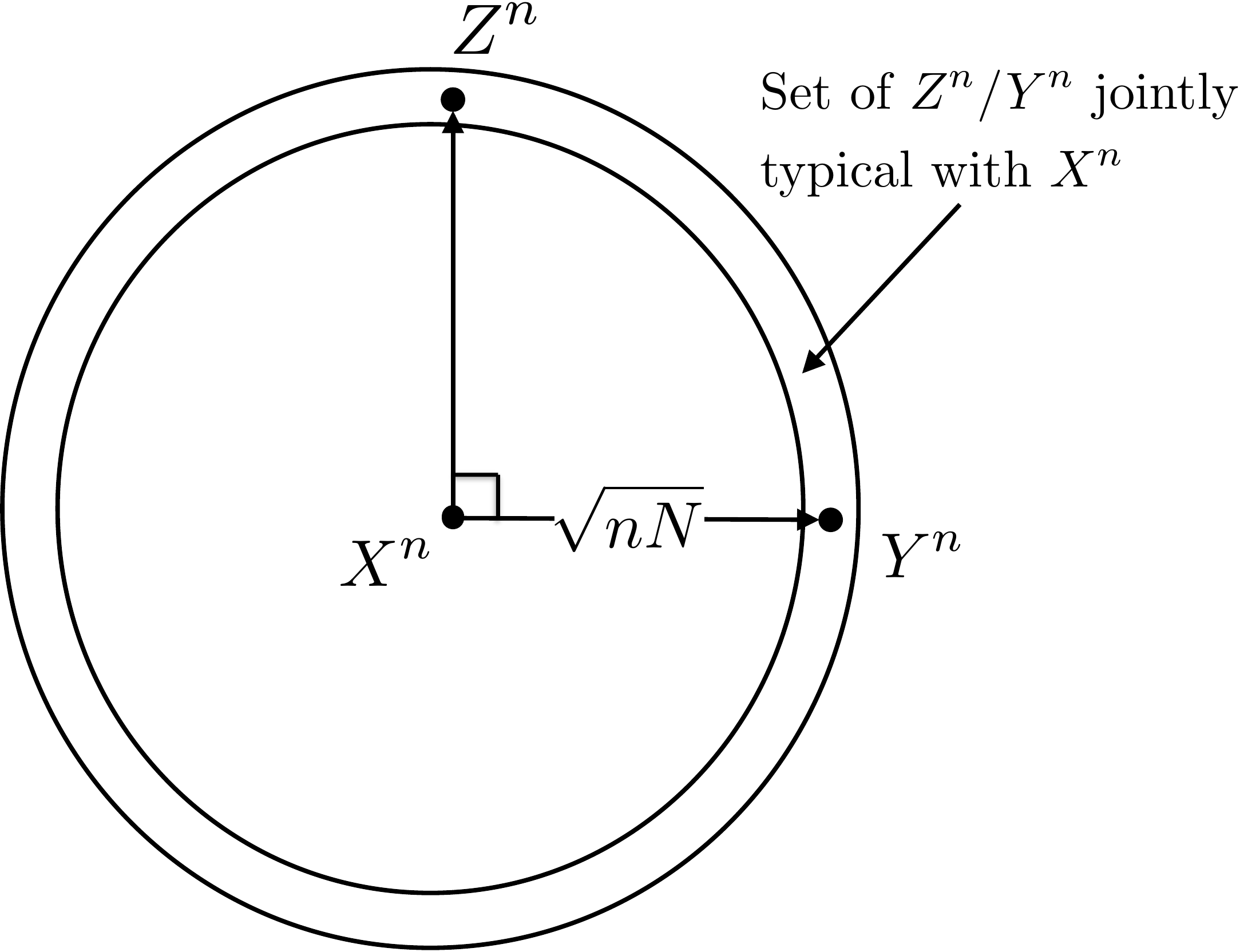}
\caption{Jointly typical set with $X^n$.}
\label{F:Typicalset}
\end{figure}

So far we have made only standard information theoretic arguments and in particular recovered the cut-set bound; note that the fact that $a_n\geq 0$ together with \eqref{E:summarize1}, \eqref{E:summarize2} and \eqref{eq:powerconst} yields the cut-set bound given in Proposition~\ref{P:cutset}. However, instead of simply lower bounding $a_n$ by $0$ in \eqref{E:summarize2}, in the sequel we will prove a third inequality involving $a_n$, which will force $a_n$ to be strictly larger than $0$. Indeed, it is intuitively easy to see that $a_n$ can not be arbitrarily small. Assume $a_n=\frac{1}{n}H(I_n|X^n)\approx 0$. Roughly speaking, this implies that given the transmitted codeword $X^n$, there is no ambiguity about $I_n$, or equivalently all $Z^n$ sequences jointly typical with $X^n$ are mapped to the same $I_n$. See Figure~\ref{F:Typicalset}. However, since $Y^n$ and $Z^n$ are statistically equivalent given $X^n$ (they share the same typical set given $X^n$) this would imply that
$I_n$ can be determined based on $Y^n$ and therefore the transmitted codeword $X^n$ can be decoded based solely on $Y^n$. This will force the rate to be smaller than $I(X_Q;Y_Q)$. In general, there is a trade-off between how close the rate can get to the multiple access bound $I(X_Q;Y_Q)+R_0$ and how much it can exceed the point-to-point capacity $I(X_Q;Y_Q)$ of the $X$-$Y$ link. We capture this trade-off as follows.

%
%n R &\leq I (X^n;Y^n,I_n)+n\epsilon_n \\
%&= I (X^n;I_n)+I (X^n;Y^n|I_n)+n\epsilon_n \\
%&= H (X^n)-H(X^n|I_n)+h(Y^n|I_n)-h(Y^n|X^n)+n\epsilon_n \\
%&\leq H (X^n)-H(X^n|I_n)+[H(X^n|I_n)-H(X^n|Z^n)+h(Y^n|X^n)+n(a+\sqrt{2a\ln2}\log e)]-h(Y^n|X^n)+n\epsilon_n \\
%&\leq I(X^n;Z^n) +n(a+\sqrt{2a\ln2}\log e)]+n\epsilon_n \\
%&\leq n(I(X_Q;Y_Q) +a+\sqrt{2a\ln2}\log e)]+n\epsilon_n

From \eqref{eq:int},
\begin{align}
n R &\leq I (X^n;Y^n,I_n)+n\mu \nonumber\\
&= I (X^n;I_n)+I (X^n;Y^n|I_n)+n\mu \nonumber\\
&= H (X^n)-nb_n+h(Y^n|I_n)-\frac{n}{2}\log 2\pi eN+n\mu,\label{eq:int2}
\end{align}
where we define $b_n:=\frac{1}{n}H(X^n|I_n)$ and use the fact that $h(Y^n|X^n,I_n)=h(Y^n|X^n)=\frac{n}{2}\log 2\pi eN$. In Section \ref{SS:Provingineq}, we prove the following key upper bound on the conditional entropy of $Y^n$ given the relay's transmission $I_n$,
\begin{equation}\label{eq:upperbound}
h(Y^n|I_n)\leq n(b_n-c_n + \frac{1}{2}\log 2\pi eN +a_n+\sqrt{2a_n\ln2}\log e  )
\end{equation}
where $c_n:=\frac{1}{n}H(X^n|Z^n)$. Combined with \eqref{eq:int2}, this yields
\begin{align}
n R&\leq I (X^n;Z^n)+n(a_n+\sqrt{2a_n\ln2}\log e ) +n\mu,\nonumber\\
&=I (X^n;Y^n)+n(a_n+\sqrt{2a_n\ln2}\log e ) +n\mu,\nonumber\\
&\leq n(I(X_Q;Y_Q)+a_n+\sqrt{2a_n\ln2}\log e  +\mu)\nonumber
\end{align}
where the equality follows from the fact that $Z^n$ and $Y^n$ are statistically equivalent given $X^n$. Equivalently,
\begin{align}
R \leq I(X_Q;Y_Q)+a_n+\sqrt{2a_n\ln2}\log e   +\mu.   \label{E:summarize3}\end{align}

Combining \dref{E:summarize1}, \dref{E:summarize2} and \dref{E:summarize3} , we have that if a rate $R$ is achievable, then for any $\mu >0$ and sufficiently large $n$,
\begin{numcases}{}
 R   \leq I(X_Q;Y_Q,Z_Q)+\mu  \nonumber \\
R    \leq   I(X_Q;Y_Q)+R_0  -a_n +\mu  \nonumber \\
R    \leq   I(X_Q;Y_Q)+a_n+\sqrt{2a_n\ln2}\log e   +\mu  \nonumber
\end{numcases}
where $E[X_Q^2]\leq P$ and $a_n\in [0,R_0]$. Since $\mu$ can be arbitrarily small, this proves Theorem \ref{T:newboundsym}.

\subsection{Proving Inequality \eqref{eq:upperbound} }\label{SS:Provingineq}

%\vspace{2mm}
%\noindent \underline{Proving Inequality \eqref{eq:upperbound}:}
%\vspace{2mm}

The remaining step then is to prove the relation in \eqref{eq:upperbound}. To prove this inequality we will look at $B$-length i.i.d. sequences of the random vectors $X^n, Y^n, Z^n,$ and $I_n$,  and derive some typicality properties for these sequences which hold with high probability when $B$ is large.\footnote{Note that $X^n$ here is a discrete random vector whose distribution is dictated by the uniform distribution on the set of possible messages and the source codebook, $Y^n$ and $Z^n$ are continuous random vectors and $I_n$ is an integer valued random variable.}

Specifically, consider the following $B$-length i.i.d. sequence
\begin{align}
\{   (X^n(b),Y^n(b),Z^n(b),I_n(b)   )    \}_{b=1}^{B}, \label{E:iidextension}
\end{align}
where for any $b\in [1:B]$, $(X^n(b),Y^n(b),Z^n(b),I_n(b))$ has the same distribution as $(X^n, Y^n, Z^n,I_n)$.  For notational convenience,  in the sequel we write the $B$-length sequence $[X^n(1),X^n(2),\ldots,X^n(B)]$ as $\mathbf X$ and similarly define $\mathbf Y, \mathbf Z$ and $\mathbf I$; note here we have
$\mathbf I=[f_n(Z^n(1)), f_n(Z^n(2)),\ldots,f_n(Z^n(B)) ]=:f(\mathbf Z)$.

We now present a key lemma in  our proof, which gives a lower bound on the conditional probability density $f(\mathbf y| \mathbf i)$ for a set of ``typical'' $(\mathbf y, \mathbf i)$ pairs. The formal proof of this lemma is delegated to  Appendix~\ref{A:ProoftoKeylemma}. In the next subsection we provide a proof sketch for this lemma that summarizes the main ideas.

\medskip
\begin{lemma}\label{L:Keylemma}
For any $\delta>0$ and sufficiently large $B$, there exists a set $\mathcal I$ of $\mathbf{i}$ such that
 $$\mbox{Pr}(\mathbf I \in \mathcal I) \geq  1-\delta,$$ and for any $\mathbf i \in \mathcal I$, there exists a set $\mathcal{Y}_\mathbf{i}$ of $\mathbf y$ such that
 $$\mbox{Pr}(\mathbf Y \in \mathcal{Y}_\mathbf{i}|\mathbf i) \geq  1-\delta,$$ and for any $\mathbf y \in \mathcal{Y}_\mathbf{i}$
\begin{align*}
f(\mathbf{y}|\mathbf{i})\geq 2^{-nB(b_n-c_n + \frac{1}{2}\log 2\pi eN +  a_n+\sqrt{2a_n\ln2}\log e    + \delta_1)}, \end{align*}
where $\delta_1 \to 0$ as $\delta \to 0$.
\end{lemma}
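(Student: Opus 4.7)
The plan is to exploit concentration of Gaussian measure applied to the relay's preimage set. The underlying intuition: a direct computation gives $h(Z^n|I_n)/n = \frac{1}{2}\log 2\pi eN + b_n - c_n$, and the lemma says that the density of $\mathbf{Y}$ given $\mathbf{I}$ is almost as large as that of $\mathbf{Z}$ given $\mathbf{I}$ would be, paying only an extra $a_n + \sqrt{2a_n\ln 2}\log e$ per symbol, because conditioning on $\mathbf{I}$ rather than $\mathbf{X}$ costs at most this much (since $H(\mathbf{I}|\mathbf{X})/nB = a_n$ and $\mathbf{Y}|\mathbf{X}$ is statistically equivalent to $\mathbf{Z}|\mathbf{X}$).

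First, using the $B$-fold i.i.d.\ extension and the AEP, I would carve out a typical set $\mathcal{I}$ with $\Pr(\mathbf{I}\in\mathcal{I})\geq 1-\delta$ such that each $\mathbf{i}\in\mathcal{I}$ admits the usual typicality estimates: with conditional probability $\geq 1-\delta$ over $\mathbf{x}\sim P(\cdot|\mathbf{i})$, one has $P(\mathbf{x}|\mathbf{i})\approx 2^{-nBb_n}$, $\Pr(\mathbf{I}=\mathbf{i}|\mathbf{X}=\mathbf{x})\geq 2^{-nB(a_n+\epsilon)}$, and a typical $\mathbf{z}$ under $f(\cdot|\mathbf{x})$ satisfies $\|\mathbf{z}-\mathbf{x}\|^2\leq nBN(1+\epsilon)$. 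The heart of the proof is then a Gaussian isoperimetric step. For $\mathbf{i}\in\mathcal{I}$ and typical $\mathbf{x}$, the preimage $\mathcal{S}_{\mathbf{i}}=\{\mathbf{z}:f(\mathbf{z})=\mathbf{i}\}$ has measure at least $2^{-nB(a_n+\epsilon)}$ under the spherical Gaussian $\mathcal{N}(\mathbf{x},NI_{nB})$. By Borell's Gaussian isoperimetric inequality, its Euclidean $\rho$-blowup with $\rho=\sqrt{2NnBa_n\ln 2}\,(1+o(1))$ has conditional measure tending to $1$, and the same bound transfers to $\mathbf{Y}$ in place of $\mathbf{Z}$ by statistical equivalence. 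Defining $\mathcal{Y}_{\mathbf{i}}$ as the intersection of a $\mathbf{y}$-typicality region with these blowups gives $\Pr(\mathbf{Y}\in\mathcal{Y}_{\mathbf{i}}|\mathbf{i})\geq 1-\delta$; for any $\mathbf{y}\in\mathcal{Y}_{\mathbf{i}}$, the triangle inequality $\|\mathbf{y}-\mathbf{x}\|\leq\|\mathbf{y}-\mathbf{z}\|+\|\mathbf{z}-\mathbf{x}\|$ with $\mathbf{z}\in\mathcal{S}_{\mathbf{i}}$ jointly typical with $\mathbf{x}$ yields $\|\mathbf{y}-\mathbf{x}\|^2\leq nBN(1+\sqrt{2a_n\ln 2})^2+o(nB)$. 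Expanding the square in the Gaussian density, using the identity $\ln 2\cdot\log e=1$, gives the pointwise bound
\[
f(\mathbf{y}|\mathbf{x})\;\geq\;2^{-nB\left(\frac{1}{2}\log 2\pi eN\,+\,a_n\,+\,\sqrt{2a_n\ln 2}\,\log e\,+\,\delta_2\right)}.
\]

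To translate this into a bound on $f(\mathbf{y}|\mathbf{i})$, I would use the Markov chain $\mathbf{I}\!-\!\mathbf{X}\!-\!\mathbf{Y}$ to write $f(\mathbf{y}|\mathbf{i})=\sum_{\mathbf{x}}P(\mathbf{x}|\mathbf{i})f(\mathbf{y}|\mathbf{x})$ and restrict the sum to typical $\mathbf{x}$ given $\mathbf{i}$ that are also ``good'' for $\mathbf{y}$. Each such term contributes $\geq 2^{-nB(b_n+\frac{1}{2}\log 2\pi eN+a_n+\sqrt{2a_n\ln 2}\log e+\delta_2)}$, and the claimed exponent demands an additional factor of $2^{nBc_n}$ worth of such terms. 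This is obtained by a double-counting argument that exploits $H(X^n|Y^n)=H(X^n|Z^n)=nc_n$ (again by the statistical equivalence of $\mathbf{Y}$ and $\mathbf{Z}$ given $\mathbf{X}$): the number of $\mathbf{x}$ jointly typical with $\mathbf{y}$ is $\approx 2^{nBc_n}$, and almost all of them remain typical given $\mathbf{i}$ once the blowup has been executed. I expect this overlap/counting step to be the main obstacle: $\mathcal{Y}_{\mathbf{i}}$ must be defined delicately enough that at least $2^{nB(c_n-\epsilon)}$ of the $(\mathbf{x}|\mathbf{i})$-typical sequences lie within Euclidean distance $\sqrt{nBN}(1+\sqrt{2a_n\ln 2})$ of $\mathbf{y}$ on a $(1-\delta)$-probable $\mathbf{y}$-set, all while preserving the tight Gaussian concentration constants that yield the precise $a_n+\sqrt{2a_n\ln 2}\log e$ exponent.
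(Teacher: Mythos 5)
Your overall architecture is the same as the paper's: apply Gaussian measure concentration (the paper's Lemma~\ref{L:Talagrand}, i.e.\ essentially the isoperimetric blow-up you invoke) to a preimage-type set of conditional probability $\approx 2^{-nBa_n}$ under $\mathcal N(\mathbf x, N I_{nB})$, transfer the blow-up from $\mathbf Z$ to $\mathbf Y$ by symmetry, use the triangle inequality to bound $\|\mathbf y-\mathbf x'\|$ and hence $f(\mathbf y|\mathbf x')$, and then sum $f(\mathbf y|\mathbf i)=\sum_{\mathbf x'}p(\mathbf x'|\mathbf i)f(\mathbf y|\mathbf x')$ over roughly $2^{nBc_n}$ contributing terms. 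However, the step you yourself flag as the main obstacle is a genuine gap, and the resolution you sketch would not work as stated. You propose to count ``the number of $\mathbf x$ jointly typical with $\mathbf y$,'' appealing to $H(X^n|Y^n)=nc_n$, and to argue that ``almost all of them remain typical given $\mathbf i$.'' But the bound must hold for \emph{every} $\mathbf y$ in the blown-up set, and such a $\mathbf y$ is an arbitrary point within Euclidean distance $\sqrt{nB}\sqrt{2Na_n\ln 2}$ of the preimage --- a macroscopic displacement when $a_n$ is bounded away from $0$ --- so it need not be a typical realization of $\mathbf Y$ at all; AEP-style counts anchored at $\mathbf y$, and in particular the claim that the $\mathbf x$'s near $\mathbf y$ also carry $p(\mathbf x|\mathbf i)\gtrsim 2^{-nBb_n}$, have no justification for such worst-case $\mathbf y$.

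The paper's proof closes exactly this hole by anchoring the count at the nearby preimage point $\mathbf z$ rather than at $\mathbf y$, and by making sure in advance that every $\mathbf z$ that gets blown up is ``good.'' Concretely, it does not blow up the raw preimage $f^{-1}(\mathbf i)$ but the set $\mathcal Z_{(\mathbf x,\mathbf i)}$, the intersection of the preimage with a two-level high-probability set $S(X^n,Z^n)$ built from $\tilde S(X^n,Z^n)$ via Lemma~\ref{L:zhanglemma}: every $\mathbf z\in\mathcal Z_{(\mathbf x,\mathbf i)}$ satisfies $\Pr(\tilde S(X^n,Z^n)|_{\mathbf z}\,|\,\mathbf z)\ge 1-\sqrt\epsilon$, which, combined with $p(\mathbf x'|\mathbf z)\le 2^{-nB(c_n-\epsilon)}$ on the slice, guarantees at least $(1-\sqrt\epsilon)2^{nB(c_n-\epsilon)}$ sequences $\mathbf x'$ that \emph{simultaneously} satisfy $d(\mathbf x',\mathbf z)\le\sqrt{nB}(\sqrt N+\epsilon)$ and $p(\mathbf x'|\mathbf i)=p(\mathbf x'|f(\mathbf z))\ge 2^{-nB(b_n+\epsilon)}$; the triangle inequality then runs through $\mathbf z$, not through typicality of $\mathbf y$. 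Restricting to this intersected set has a price --- one must re-establish that $\Pr(\mathcal Z_{(\mathbf x,\mathbf i)}|\mathbf x)\ge 2^{-nB(a_n+2\epsilon)}$ so that the concentration lemma still applies --- and the paper pays it with a further Markov/Zhang-type step defining $S(X^n,I_n)$ (Lemmas~\ref{L:prob_s(x,i)} and~\ref{L: properties_S(x,i)}). Without this nested construction (or some equivalent device), the $2^{nBc_n}$ multiplicity in your final sum, and hence the $-c_n$ term in the exponent of the lemma, is not justified.
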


Equipped with this lemma, it is not difficult to prove \eqref{eq:upperbound}. For this, first consider $h(\mathbf{Y} |\mathbf{i})$ for any $\mathbf{i} \in \mathcal I$. We have
\begin{align}
h(\mathbf{Y} |\mathbf{i}) &\leq h(\mathbf{Y} |\mathbf{i})+1 - I(\mathbf{Y}; \mathbb{I}(\mathbf{Y} \in \mathcal{Y}_\mathbf{i})|\mathbf{i} ) \label{E:indicator} \\
\nonumber &=  1+h(\mathbf{Y} |\mathbb{I}(\mathbf{Y} \in \mathcal{Y}_\mathbf{i}), \mathbf{i}) \\
&=1+\mbox{Pr}(\mathbf{Y} \in \mathcal{Y}_\mathbf{i}|\mathbf{i} ) h(\mathbf{Y} |\mathbf{i},\mathbf{Y} \in \mathcal{Y}_\mathbf{i}) \nonumber  \\
&~~~~~+\mbox{Pr}(\mathbf{Y} \notin \mathcal{Y}_\mathbf{i}|\mathbf{i} ) h(\mathbf{Y} |\mathbf{i},\mathbf{Y} \notin \mathcal{Y}_\mathbf{i}),\label{E:pluggingH(Y|i)}
\end{align}
where $\mathbb I(A)$ is the indicator function defined as 1 if $A$ holds and 0 otherwise, and  \dref{E:indicator} follows since
$$I(\mathbf{Y}; \mathbb{I}(\mathbf{Y} \in \mathcal{Y}_\mathbf{i})|\mathbf{i} )\leq H(\mathbb{I}(\mathbf{Y} \in \mathcal{Y}_\mathbf{i})|\mathbf{i} )\leq 1.$$

To bound $h(\mathbf{Y} |\mathbf{i},\mathbf{Y} \in \mathcal{Y}_\mathbf{i})$, we have by Lemma \ref{L:Keylemma} that,
\begin{align}
&h(\mathbf{Y} |\mathbf{i},\mathbf{Y} \in \mathcal{Y}_\mathbf{i})\nonumber \\
& =- \int_{\mathbf y \in \mathcal{Y}_\mathbf{i}} f(\mathbf y|\mathbf{i},\mathbf{Y} \in \mathcal{Y}_\mathbf{i} )
\log f(\mathbf y|\mathbf{i},\mathbf{Y} \in \mathcal{Y}_\mathbf{i} ) d\mathbf y \nonumber \\
&\leq  - \int_{\mathbf y \in \mathcal{Y}_\mathbf{i}} f(\mathbf y|\mathbf{i},\mathbf{Y} \in \mathcal{Y}_\mathbf{i} )
\log f(\mathbf y|\mathbf{i} ) d\mathbf y \nonumber \\
&\leq nB(b_n-c_n + \frac{1}{2}\log 2\pi eN + a_n+\sqrt{2a_n\ln2}\log e    + \delta_1)  . \label{E:pluggedH(Y|i)1}\end{align}

Now consider $E [   ||\mathbf{Y}||^2   |\mathbf{i}] $ for any $\mathbf i$. We have
\begin{align*}
E [   ||\mathbf{Y}||^2   |\mathbf{i}] &=  E [   ||\mathbf{X}||^2   |\mathbf{i}] +  E [   ||\mathbf{W_2}||^2   |\mathbf{i}]\leq nB(P+N),
\end{align*}
where the equality follows from the independence between $\mathbf{X}$ and $\mathbf{W_2}$ even conditioned on $\mathbf{i}$.
Therefore,    $$E [   ||\mathbf{Y}||^2   |\mathbf{i},\mathbf{Y} \notin \mathcal{Y}_\mathbf{i}] \leq \frac{E [   ||\mathbf{Y}||^2   |\mathbf{i}] }{   \mbox{Pr}(\mathbf{Y} \notin \mathcal{Y}_\mathbf{i}|\mathbf{i} )  }\leq \frac{nB(P+N)}{   \mbox{Pr}(\mathbf{Y} \notin \mathcal{Y}_\mathbf{i}|\mathbf{i} )  }  ,$$
and
\begin{align}
&\ ~\ ~\mbox{Pr}(\mathbf{Y} \notin \mathcal{Y}_\mathbf{i}|\mathbf{i} ) h(\mathbf{Y} |\mathbf{i},\mathbf{Y} \notin \mathcal{Y}_\mathbf{i})\nonumber \\
& \leq\frac{nB}{2} \mbox{Pr}(\mathbf{Y} \notin \mathcal{Y}_\mathbf{i}|\mathbf{i} )  \log 2\pi e \frac{P+N}{   \mbox{Pr}(\mathbf{Y} \notin \mathcal{Y}_\mathbf{i}|\mathbf{i} )} \nonumber \\
& \leq  nB \delta_2 ,\label{E:pluggedH(Y|i)2}
\end{align}
for some $\delta_2 \to 0$ as $\delta \to 0$.

Plugging \dref{E:pluggedH(Y|i)1}  and \dref{E:pluggedH(Y|i)2}   into \dref{E:pluggingH(Y|i)}, we have for any $\mathbf{i} \in \mathcal I$,
\begin{align}
&h(\mathbf{Y} |\mathbf{i})\nonumber\\
&\leq 1+ \mbox{Pr}(\mathbf{Y} \in \mathcal{Y}_\mathbf{i}|\mathbf{i} ) nB[b_n-c_n +  \frac{1}{2}\log 2\pi eN \nonumber \\
 &~~~~~~~~~~~~~~~~~~~~~~~~~~~+  a_n+\sqrt{2a_n\ln2}\log e    + \delta_1] + nB\delta_2\nonumber \\
\nonumber  &=  nB(b_n-c_n + \frac{1}{2}\log 2\pi eN + a_n+\sqrt{2a_n\ln2}\log e  + \delta_3)
\end{align}
where $\delta_3 \to 0$ as $\delta \to 0$ and $B\to \infty$. Therefore, for sufficiently large $B$,
\begin{align}
&h(\mathbf{Y}|\mathbf{I})\nonumber \\
&=\sum_{i} p(\mathbf{i})h(\mathbf{Y} |\mathbf{i})\nonumber \\
&= \sum_{\mathbf{i}\in \mathcal I} p(\mathbf{i})h(\mathbf{Y} |\mathbf{i}) + \sum_{\mathbf{i} \not\in \mathcal I} p(\mathbf{i})h(\mathbf{Y}|\mathbf{i})\nonumber \\
&\leq \sum_{\mathbf{i}\in \mathcal I} p(\mathbf{i})nB(b_n-c_n + \frac{1}{2}\log 2\pi eN +  a_n  \nonumber \\
&~~~+\sqrt{2a_n\ln2}\log e+ \delta_3)+ \sum_{\mathbf{i} \not\in \mathcal I} p(\mathbf{i})  \frac{nB}{2}\log 2\pi e (P+N)       \nonumber \\
&= nB(b_n-c_n + \frac{1}{2}\log 2\pi eN + a_n+\sqrt{2a_n\ln2}\log e  + \delta_4)  \label{E:compareupper}
\end{align}
where $\delta_4 \to 0$ as $\delta \to 0$ and $B\to \infty$.
Finally observing that
\begin{align}
h(\mathbf{Y}|\mathbf{I}) = \sum_{b=1}^{B}h( Y^n(b) |  I_n(b)) =Bh(Y^n|I_n)\nonumber
\end{align}
and taking $B\to \infty$ complete the proof of inequality \eqref{eq:upperbound}.

\subsection{Proof Idea for Lemma~\ref{L:Keylemma}}

We now provide a proof sketch for Lemma~\ref{L:Keylemma}. The formal proof can be found in Appendix~\ref{A:ProoftoKeylemma}.

By the law of large numbers, if $H(I_n|X^n)=na_n$, then given a typical $(\mathbf x, \mathbf i)$ pair, it can be shown that
$$\mbox{Pr}(\mathbf{Z} \in \mathcal Z_{(\mathbf x, \mathbf i)}|\mathbf x)\doteq2^{-B\,I(Z^n;I_n|X^n)}= 2^{-nBa_n},$$
where $\mathcal Z_{(\mathbf x, \mathbf i)}$ can be roughly viewed as the set of $\mathbf z$ that are jointly typical with $(\mathbf x, \mathbf i)$.

Now we apply the following lemma, whose proof relies on a Gaussian measure concentration result and is included in Appendix \ref{A:ProoftoTalagrand}.
\begin{lemma}\label{L:Talagrand}
Let $U_1, U_2, \ldots, U_n$ be $n$ i.i.d. Gaussian random variables with $U_i \sim \mathcal{N}(0,N),\forall i \in \{1,2,\ldots,n\}$. Then, for any $A\subseteq \mathbb R^{n}$ with $\mbox{Pr}(U^n \in A)\geq 2^{-na_n}$,
\begin{align*}
\mbox{Pr}(U^n \in \Gamma_{\sqrt{n}( \sqrt{2Na_n\ln2} +r ) } (A) )\geq 1-2^{-\frac{nr^2}{2N}},\forall r>0,
\end{align*}
where
\begin{align*}
&\Gamma_{\sqrt{n}( \sqrt{2Na_n\ln2} +r ) } (A)\\
&:=\{\underline{\mathbf{\omega}}\in \mathbb R^{n}: \exists \  \underline{\mathbf{\omega}}'\in A \text{~s.t.~} d(\underline{\mathbf{\omega}},\underline{\mathbf{\omega}}') \leq \sqrt{n} (\sqrt{2Na_n\ln2} +r ) \},
\end{align*}
with $d(\underline{\mathbf{\omega}},\underline{\mathbf{\omega}}') :=||\underline{\mathbf{\omega}}-\underline{\mathbf{\omega}}' ||$ denoting the Euclidean distance between $\underline{\mathbf{\omega}}$ and $\underline{\mathbf{\omega}}'$.
\end{lemma}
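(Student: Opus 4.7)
The plan is to derive Lemma~\ref{L:Talagrand} from the classical concentration of Gaussian measure, in particular from the Borell--Sudakov--Tsirelson inequality, which states that for the standard Gaussian measure $\gamma$ on $\mathbb{R}^n$, any Borel set $A'$ with $\gamma(A') \geq \Phi(\alpha)$ satisfies $\gamma(A'_s) \geq \Phi(\alpha+s)$ for every $s > 0$, where $A'_s$ denotes the Euclidean $s$-enlargement of $A'$ and $\Phi$ is the standard normal CDF. Everything else is a rescaling together with a Mills-ratio tail estimate.

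First I would reduce to the isotropic standard Gaussian by setting $V_i := U_i/\sqrt{N}$ and $A' := \{v \in \mathbb{R}^n : \sqrt{N}\,v \in A\}$, so that $V^n$ has law $\gamma$ and $\gamma(A') = \mbox{Pr}(U^n \in A) \geq 2^{-na_n}$. Euclidean distances scale by $\sqrt{N}$ under the map $v \mapsto \sqrt{N}v$, so if I choose
\begin{align*}
s \;:=\; \sqrt{2na_n\ln 2} \,+\, r\sqrt{n/N},
\end{align*}
then $\sqrt{N}\,s = \sqrt{n}\bigl(\sqrt{2Na_n\ln 2}+r\bigr)$, matching exactly the radius appearing in the statement. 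Proving $\gamma(A'_s) \geq 1-2^{-nr^2/(2N)}$ then translates directly into the claim of the lemma.

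Next I would invoke the isoperimetric inequality with $\alpha := \Phi^{-1}(2^{-na_n}) \leq \Phi^{-1}(\gamma(A'))$. The standard Mills bound $\Phi(-x) \leq e^{-x^2/2}$ for $x \geq 0$ gives $\Phi^{-1}(2^{-na_n}) \geq -\sqrt{2na_n\ln 2}$, since if $y \geq 0$ solves $\Phi(-y) = 2^{-na_n}$, then $2^{-na_n} \leq e^{-y^2/2}$, forcing $y \leq \sqrt{2na_n\ln 2}$. Combined with the choice of $s$ above, this yields $\alpha+s \geq r\sqrt{n/N}$, and hence
\begin{align*}
\gamma(A'_s) \;\geq\; \Phi\bigl(r\sqrt{n/N}\bigr) \;\geq\; 1-e^{-nr^2/(2N)} \;\geq\; 1-2^{-nr^2/(2N)},
\end{align*}
where the middle inequality is the same Mills estimate applied to the opposite tail and the final step uses $e>2$, so $e^{-x}\leq 2^{-x}$ for $x\geq 0$. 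Translating back to $U$-coordinates via the rescaling above then completes the proof.

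The work is essentially all carried by the dimension-free Gaussian concentration inequality; the remaining task is just keeping track of constants. The $\ln 2$ factor inside the first square root in the statement is precisely what comes out of the base change when the measure lower bound $2^{-na_n}$ is converted into the Gaussian quantile $\Phi^{-1}$ through the Mills estimate, and the second, independent base change (from the natural exponential tail into base $2$) is absorbed harmlessly thanks to $e>2$. I therefore do not anticipate a genuine obstacle; the only care required is to line up the two rescalings (by $\sqrt{N}$ and by the base-$2$/base-$e$ conversion) so that the constants match the statement of the lemma exactly.
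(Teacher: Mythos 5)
Your proof is correct and follows essentially the same route as the paper: rescale $U^n$ to a standard Gaussian vector, apply dimension-free Gaussian concentration to the rescaled set, and convert the radius and the tail exponent back via the factor $\sqrt{N}$ and the base-$2$/base-$e$ change. The only difference is cosmetic: the paper quotes the concentration inequality in the ready-made form of Talagrand's (1.6), whereas you re-derive that form from the Borell--Sudakov--Tsirelson isoperimetric inequality together with the Mills-type bound $\Phi(-x)\leq e^{-x^2/2}$, which is exactly how (1.6) is obtained.
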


With Lemma  \ref{L:Talagrand}, it can be shown that if one blows up $\mathcal Z_{(\mathbf x, \mathbf i)}$ with a radius $\sqrt{nB} \sqrt{2Na_n\ln2} $, the resultant set, denoted by $\Gamma_{\sqrt{nB} \sqrt{2Na_n\ln2} }(\mathcal Z_{(\mathbf x, \mathbf i)})$, has probability nearly 1, i.e.,
\begin{align}\label{E:blowingup}
\mbox{Pr}(\mathbf Z \in \Gamma_{\sqrt{nB} \sqrt{2Na_n\ln2}}(\mathcal Z_{(\mathbf x, \mathbf i)})|\mathbf x)\approx 1.
\end{align}
Due to the symmetry of the channel, \dref{E:blowingup} still holds with $\mathbf Z$ replaced by $\mathbf Y$. %This implies that given a typical $(\mathbf x, \mathbf i)$ pair, for all $\mathbf y\in\Gamma_{\sqrt{nB} \sqrt{2Na_n\ln2}}(\mathcal Z_{(\mathbf x, \mathbf i)})$ there exists some $\mathbf{z}\in \mathcal Z_{(\mathbf x, \mathbf i)}$ such that $ d(\mathbf y ,  \mathbf{z} )\leq \sqrt{nB} \sqrt{2Na_n\ln2}$.

Now given a typical $(\mathbf x, \mathbf i)$ pair, we lower bound the conditional density $f(\mathbf y|\mathbf i)$ for all $\mathbf y\in\Gamma_{\sqrt{nB} \sqrt{2Na_n\ln2}}(\mathcal Z_{(\mathbf x, \mathbf i)})$. Given such $\mathbf y$, there exists some $\mathbf{z}\in \mathcal Z_{(\mathbf x, \mathbf i)}$ such that $ d(\mathbf y ,  \mathbf{z} )\leq \sqrt{nB} \sqrt{2Na_n\ln2}$. Consider the set of all $\mathbf x$ that are jointly typical with this $\mathbf{z}$. %for which $ d(\mathbf y ,  \mathbf{z} )\leq \sqrt{nB} \sqrt{2Na_n\ln2}$. (The existence of such $\mathbf{z}$ is guaranteed by the above argument.) 
It can be shown that the $\mathbf x$'s that are jointly typical with a given $\mathbf{z}\in \mathcal Z_{(\mathbf x, \mathbf i)}$ are such that
$$
d(\mathbf x, \mathbf z)\leq \sqrt{nBN},
$$
and
$$p(\mathbf x|\mathbf i)\doteq 2^{-nBb_n}.$$
 Therefore for each $\mathbf x$ in this set
\begin{align*}
d(\mathbf x, \mathbf y)&\leq d(\mathbf x, \mathbf z)+d(\mathbf z, \mathbf y)\\
&\leq \sqrt{nB}(\sqrt{N}+\sqrt{2Na_n\ln 2}),
\end{align*}
which leads to the following lower bound on $f(\mathbf y|\mathbf x)$,
$$f(\mathbf y|\mathbf x) \stackrel {.}{\geq}  2^{ -nB \left( \frac{1}{2}\log 2\pi e N +a_n+\sqrt{2a_n\ln2}\log e       \right) },$$
by using the fact that $\mathbf y$ is Gaussian given $\mathbf x$. The set of such $\mathbf x$'s can be shown to have cardinality approximately given by $2^{nBc_n}$. Combining this with the above, we have
\begin{align*}
f(\mathbf{y}|\mathbf{i})&=\sum_{\mathbf x} f(\mathbf{y}|\mathbf{x})p(\mathbf{x}|\mathbf{i})\nonumber \\
&\stackrel {.}{\geq}  2^{nBc_n} 2^{-nBb_n}  2^{ -nB \left( \frac{1}{2}\log 2\pi e N +a_n+\sqrt{2a_n\ln2}\log e      \right) }.
\end{align*}
Using the fact  $(\mathbf x, \mathbf i)$ are jointly typical with high probability and given a typical $(\mathbf x, \mathbf i)$ the above lower bound holds for all $\mathbf{y}$ with high probability completes the proof of Lemma \ref{L:Keylemma}. A rigorous proof is given in the sequel.

\appendices

\section{Proof of Lemma \ref{L:Talagrand}}\label{A:ProoftoTalagrand}
Given $A\subseteq \mathbb R^{n}$, let $B:=\{\underline{\omega}\in \mathbb R^{n}  :    \sqrt{N} \underline{\omega} \in A   \}$ and $V_i=\frac{U_i}{\sqrt{N}}, \forall i \in \{1,2,\ldots,n\}$. Then $V_1, V_2, \ldots, V_n$ are $n$ i.i.d. standard Gaussian random variables with $V_i \sim \mathcal{N}(0,1),\forall i \in \{1,2,\ldots,n\}$, and
\begin{align*}
\mbox{Pr}(V^n \in B)= \mbox{Pr}(\sqrt{N} V^n \in A)=\mbox{Pr}(U^n \in A)   \geq 2^{-na_n}.
\end{align*}
We next invoke Gaussian measure concentration as stated in (1.6) of \cite{Talagrand}: for any $B\subseteq \mathbb R^{n}$ and $$t\geq \sqrt{-2\ln \text{Pr}(V^n \in B)},$$ we have
\begin{align*}
\mbox{Pr}(V^n \in \Gamma_{t } (B) )\geq 1-e^{-\frac{1}{2}\left( t- \sqrt{-2\ln \text{Pr}(V^n \in B)}  \right)^2}.
\end{align*}
Thus,  for any $r> 0$,
\begin{align*}
&\mbox{Pr}(V^n \in \Gamma_{\sqrt{n}( \sqrt{2a_n\ln 2 } +\frac{r}{\sqrt{N}} ) } (B) )\\
\geq \ &\mbox{Pr}(V^n \in \Gamma_{\sqrt{-2\ln \text{Pr}(V^n \in B)} + \sqrt{\frac{n}{N}}r    } (B) )\\
\geq \ & 1-2^{-\frac{nr^2}{2N}}.
\end{align*}
Noting that
$$\Gamma_{\sqrt{n}( \sqrt{2Na_n\ln 2 } +r ) } (A)
=\left\{\sqrt{N} \underline{\omega} :     \underline{\omega} \in \Gamma_{\sqrt{n}( \sqrt{2a_n\ln 2} +\frac{r}{\sqrt{N}}) } (B)  \right\},$$
we have
\begin{align*}
&\mbox{Pr}(U^n \in \Gamma_{\sqrt{n}( \sqrt{2Na_n\ln2} +r ) } (A) )\\
=\ &\mbox{Pr}(\sqrt{N}V^n \in \Gamma_{\sqrt{n}( \sqrt{2Na_n\ln2} +r ) } (A) )\\
=\ & \mbox{Pr}(V^n \in \Gamma_{\sqrt{n}( \sqrt{2a_n\ln2} +\frac{r}{\sqrt{N}} ) } (B) )\\
\geq \ & 1-2^{-\frac{nr^2}{2N}}.
\end{align*}

\section{Proof of Lemma \ref{L:Keylemma}}\label{A:ProoftoKeylemma}

\subsection{Definitions of High Probability Sets}
By considering the $B$-length i.i.d. extensions of the $n$-letter random variables involved, law of large numbers allows us to concentrate on a series of ``high probability'' sets defined in the following.\footnote{The high probability sets defined here are analogous to  strongly typical sets that are widely used in information theory. In the Gaussian case, the notion of strong typicality doesn't apply and thus we need to develop our own customized high probability sets. In the discrete memoryless case \cite{WuXieOzgur_ISIT2015}, one can simply resort to strong typicality.}

\vspace{2mm}
\noindent \underline{Definition of $\tilde S(X^n,Z^n)$}
\vspace{2mm}
\begin{lemma}
Assume $H(I_n| X^{n})={n}a_n, H(X^{n}|I_{n})={n}b_n, H(X^{n}|Z^{n})={n}c_n$ for the $n$-channel use code.
Given any $\epsilon >0$ and sufficiently large $B$, we have
\begin{align*}
\mbox{Pr}((\mathbf{X},\mathbf{Z})\in \tilde S(X^n,Z^n))\geq 1-\epsilon
\end{align*}
where
\begin{align*}
 \tilde S(X^n,Z^n):=\Big \{(\mathbf{x},\mathbf{z} ): & ~d(\mathbf x, \mathbf z) \in [\sqrt{nB}(\sqrt{N}-\epsilon),  \sqrt{nB}(\sqrt{N}+\epsilon)    ]\\
 &~2^{-nB(a_n+\epsilon)}\leq p( f(\mathbf{z})|\mathbf{x})\leq 2^{-nB(a_n-\epsilon)} \\
 &~2^{-nB(b_n+\epsilon)}\leq p(\mathbf{x}|f(\mathbf{z}))\leq 2^{-nB(b_n-\epsilon)} \\
  &~2^{-nB(c_n+\epsilon)}\leq p(\mathbf{x}|\mathbf{z})\leq 2^{-nB(c_n-\epsilon)} \Big \}
\end{align*}
\end{lemma}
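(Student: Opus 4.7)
The plan is to show that each of the four defining conditions of $\tilde S(X^n, Z^n)$ holds with probability at least $1 - \epsilon/4$ for sufficiently large $B$, and then combine them via the union bound. The key observation is that, by the $B$-length i.i.d. construction in \eqref{E:iidextension}, each condition is a constraint on an empirical average of $B$ i.i.d.\ $n$-letter random variables whose expectation is exactly the quantity appearing in the condition. Thus each case reduces to a single application of the weak law of large numbers.

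First I would handle the distance condition. Since $\mathbf{Z} = \mathbf{X} + \mathbf{W}_1$ (concatenation of $B$ i.i.d.\ noise blocks), we have $d(\mathbf{X},\mathbf{Z})^2 = \sum_{b=1}^{B} \|W_1^n(b)\|^2$. Each $\|W_1^n(b)\|^2$ is the sum of $n$ i.i.d.\ $\mathcal{N}(0,N)$ squares, so it has mean $nN$ and finite variance. By WLLN, $\frac{1}{B}d(\mathbf{X},\mathbf{Z})^2 \to nN$ in probability, which (using continuity of the square root around $\sqrt{nN}$) yields $d(\mathbf{X},\mathbf{Z}) \in [\sqrt{nB}(\sqrt{N}-\epsilon),\sqrt{nB}(\sqrt{N}+\epsilon)]$ with probability at least $1-\epsilon/4$ for $B$ large.

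Next I would handle the three probability-density/pmf conditions in a unified way. The i.i.d.\ structure gives the product factorizations
\begin{align*}
-\tfrac{1}{B}\log p(f(\mathbf{Z})|\mathbf{X}) &= \tfrac{1}{B}\sum_{b=1}^{B} \bigl[-\log p(I_n(b)|X^n(b))\bigr],\\
-\tfrac{1}{B}\log p(\mathbf{X}|f(\mathbf{Z})) &= \tfrac{1}{B}\sum_{b=1}^{B} \bigl[-\log p(X^n(b)|I_n(b))\bigr],\\
-\tfrac{1}{B}\log p(\mathbf{X}|\mathbf{Z}) &= \tfrac{1}{B}\sum_{b=1}^{B} \bigl[-\log p(X^n(b)|Z^n(b))\bigr].
\end{align*}
The summands are i.i.d.\ random variables with expectations $H(I_n|X^n) = na_n$, $H(X^n|I_n) = nb_n$, and $H(X^n|Z^n) = nc_n$, respectively. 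By WLLN, each empirical average lies in an $\epsilon$-window of its mean with probability at least $1-\epsilon/4$ for sufficiently large $B$, which translates directly into the two-sided exponential bounds required by the definition of $\tilde S(X^n, Z^n)$.

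Finally, applying the union bound across the four events yields $\Pr((\mathbf{X},\mathbf{Z})\in \tilde S(X^n, Z^n))\geq 1-\epsilon$, as claimed. The only subtle point is that the summands in the three information-based conditions may be unbounded (since $-\log p(X^n|I_n)$ and $-\log p(X^n|Z^n)$ need not be bounded almost surely); however, their expectations equal the finite conditional entropies $nb_n$ and $nc_n$, so Khintchine's WLLN (which needs only finite first moment for i.i.d.\ summands) still applies. Thus the only genuine work is verifying this integrability; everything else is routine concentration of i.i.d.\ averages.
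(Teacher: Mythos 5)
Your proposal is correct and matches the paper's intended argument: the paper simply states the lemma is ``a simple consequence of the law of large numbers,'' and your block-wise decomposition of each of the four conditions into empirical averages of $B$ i.i.d.\ $n$-letter quantities, followed by the weak law of large numbers and a union bound, is exactly that argument spelled out. Your integrability remark is also sound, since the log-likelihood summands are nonnegative with finite means ($H(I_n|X^n)\leq nR_0$ and $H(X^n|I_n),\,H(X^n|Z^n)\leq nR$), so Khintchine's WLLN indeed applies.
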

The lemma is a simple consequence of the law of large numbers.

\vspace{2mm}
\noindent \underline{Definition of $S(X^n,Z^n)$}
\vspace{2mm}

To define $S(X^n,Z^n)$,  we first consider the following lemma, which has been proved in \cite{Zhang}.
\begin{lemma}\label{L:zhanglemma}
Let $A \subseteq C\times D$. For $x\in C$, use $A|_x$ to denote the set $$A|_x=\{y\in D: (x,y)\in A \}.$$ If $\mbox{Pr}(A)\geq 1-\epsilon$,  then
$\mbox{Pr}(B)\geq 1-\sqrt{\epsilon}$,
where $$B:=\{x\in C: \mbox{Pr}(A|_x|x)\geq 1-\sqrt{\epsilon} \}.$$
\end{lemma}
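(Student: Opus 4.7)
The plan is to prove this by a straightforward application of Markov's inequality to the conditional failure probability. Let me define the non-negative function $\phi(x) := \Pr(A^c|_x \mid x) = 1 - \Pr(A|_x \mid x)$, where $A^c|_x = \{y \in D : (x,y) \notin A\}$ is the complement slice at $x$. By the law of total probability,
\begin{align*}
\E_X[\phi(X)] = \int_C \Pr(A^c|_x \mid x)\, dP_X(x) = \Pr(A^c) \leq \epsilon.
\end{align*}

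Next, I would observe that the set $B$ in the statement can be rewritten as $B = \{x \in C : \phi(x) \leq \sqrt{\epsilon}\}$, so that its complement is $B^c = \{x \in C : \phi(x) > \sqrt{\epsilon}\}$. Applying Markov's inequality to the non-negative function $\phi$ with threshold $\sqrt{\epsilon}$ gives
\begin{align*}
\Pr(B^c) = \Pr(\phi(X) > \sqrt{\epsilon}) \leq \frac{\E_X[\phi(X)]}{\sqrt{\epsilon}} \leq \frac{\epsilon}{\sqrt{\epsilon}} = \sqrt{\epsilon},
\end{align*}
and therefore $\Pr(B) \geq 1 - \sqrt{\epsilon}$, as claimed.

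There is no serious obstacle in this argument; the entire content is recognizing that the square-root appearing in the conclusion is precisely the geometric-mean split produced by Markov's inequality when one wants to simultaneously control the measure of the "bad $x$" set and the conditional failure probability on the "good $x$" set, both from the single budget $\Pr(A^c) \leq \epsilon$. The only minor care needed is measurability of $\phi$, which follows from Fubini/Tonelli applied to the indicator of $A$ under the joint measure, and the fact that the decomposition $\Pr(A^c) = \E_X[\Pr(A^c|_X \mid X)]$ uses the existence of regular conditional probabilities on the standard spaces $C$ and $D$ (which is not an issue in the settings of interest, where $C$ and $D$ are subsets of Euclidean or discrete product spaces).
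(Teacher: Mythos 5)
Your proof is correct. Note that the paper itself does not prove this lemma --- it simply invokes it as a known result from Zhang's 1988 paper \cite{Zhang} --- and the argument you give (writing $\Pr(A^c)=\E_X[\Pr(A^c|_X\mid X)]\leq\epsilon$ and applying Markov's inequality at threshold $\sqrt{\epsilon}$) is exactly the standard one-line proof of that result, so there is nothing to reconcile beyond the measurability remarks you already make.
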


Now, define
\begin{align*}
S(X^n,Z^n)=\{ (\mathbf{x},\mathbf{z})\in \tilde S(X^n,Z^n): \mbox{Pr}(\tilde S(X^n,Z^n)|_\mathbf{z}|\mathbf{z})\geq 1-\sqrt{\epsilon}  \}.
\end{align*}
%Readily, we have $\mbox{Pr}(S(\mathbf{x},\mathbf{z} ))\geq 1-2\sqrt{\epsilon}$, due to Lemma \ref{L:zhanglemma} and the fact that
%$\mbox{Pr}(S_1(\mathbf{x},\mathbf{y},\mathbf{z}))\geq 1-\epsilon$.
Clearly $S(X^n,Z^n)$ is a subset of $\tilde S(X^n,Z^n)$. The following lemma says that it is also a high probability set.
\begin{lemma} \label{L: properties_S(x,z)}
$\mbox{Pr}(S(X^n,Z^n))\geq 1- 2\sqrt{\epsilon}$ for $B$ sufficiently large.
\end{lemma}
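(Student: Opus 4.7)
The plan is to apply Lemma \ref{L:zhanglemma} directly to $A = \tilde S(X^n, Z^n)$, with $\mathbf{z}$ (rather than $\mathbf{x}$) playing the role of the conditioning coordinate in that lemma. Since the previous lemma gives $\mbox{Pr}((\mathbf{X}, \mathbf{Z}) \in \tilde S(X^n, Z^n)) \geq 1 - \epsilon$, Lemma \ref{L:zhanglemma} immediately produces a set
\[
\mathcal{B} := \{\mathbf{z} : \mbox{Pr}(\tilde S(X^n, Z^n)|_\mathbf{z} \mid \mathbf{z}) \geq 1 - \sqrt{\epsilon}\}
\]
with marginal probability $\mbox{Pr}(\mathbf{Z} \in \mathcal{B}) \geq 1 - \sqrt{\epsilon}$.

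Next, by its very definition, $(\mathbf{x}, \mathbf{z}) \in S(X^n, Z^n)$ if and only if $(\mathbf{x}, \mathbf{z}) \in \tilde S(X^n, Z^n)$ \emph{and} $\mathbf{z} \in \mathcal{B}$. A one-line inclusion-exclusion bound then yields
\[
\mbox{Pr}(S(X^n,Z^n)) \geq \mbox{Pr}(\mathbf{Z} \in \mathcal{B}) - \mbox{Pr}((\mathbf{X}, \mathbf{Z}) \notin \tilde S(X^n,Z^n)) \geq (1 - \sqrt{\epsilon}) - \epsilon \geq 1 - 2\sqrt{\epsilon},
\]
where the last step uses $\epsilon \leq \sqrt{\epsilon}$ for $\epsilon \in [0,1]$, giving the claimed bound for all sufficiently large $B$ (the ``sufficiently large $B$'' enters only to ensure the $1-\epsilon$ bound on $\tilde S$).

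The statement is essentially an immediate corollary of Lemma \ref{L:zhanglemma}, so there is no genuine obstacle. The only mildly subtle point is that Lemma \ref{L:zhanglemma} is phrased in discrete probability notation, while here $\mathbf{X}$ is discrete but $\mathbf{Z}$ is continuous. This is harmless, since the proof of Lemma \ref{L:zhanglemma} is nothing more than Markov's inequality applied to the nonnegative function $\mathbf{z} \mapsto \mbox{Pr}(\tilde S^c|_\mathbf{z} \mid \mathbf{z})$, and remains valid regardless of the nature of the marginal law of $\mathbf{Z}$. In writing up, I would simply invoke Lemma \ref{L:zhanglemma} and perform the two-term union bound above.
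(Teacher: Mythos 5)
Your proposal is correct and follows essentially the same route as the paper: invoke Lemma \ref{L:zhanglemma} with $\mathbf{z}$ as the conditioning coordinate to get $\mbox{Pr}\{\mathbf{z}: \mbox{Pr}(\tilde S(X^n,Z^n)|_\mathbf{z}\,|\,\mathbf{z})\geq 1-\sqrt{\epsilon}\}\geq 1-\sqrt{\epsilon}$, then combine with $\mbox{Pr}(\tilde S^c(X^n,Z^n))\leq\epsilon$ via a union bound and use $\epsilon\leq\sqrt{\epsilon}$; the paper phrases this as bounding $\mbox{Pr}(S^c(X^n,Z^n))\leq\epsilon+\sqrt{\epsilon}\leq 2\sqrt{\epsilon}$, which is the same two-term estimate. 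Your remark that Lemma \ref{L:zhanglemma} is just Markov's inequality and hence applies with continuous $\mathbf{Z}$ is a fair (and correct) addition the paper leaves implicit.
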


\begin{proof}
Consider $B$ sufficiently large. Due to Lemma \ref{L:zhanglemma} and the fact that $\mbox{Pr}(\tilde S(X^n,Z^n) )\geq 1-\epsilon$, we have
$$\mbox{Pr}\{ (\mathbf{x},\mathbf{z}): \mbox{Pr}(\tilde S(X^n,Z^n)|_\mathbf{z}|\mathbf{z})\geq 1-\sqrt{\epsilon}  \}\geq 1-\sqrt{\epsilon}.$$
Then by the definition of $S(X^n,Z^n)$,
\begin{align*}
&\mbox{Pr}(S^c(X^n,Z^n))\\
\leq\ & \mbox{Pr}(\tilde S^c(X^n,Z^n))+
\mbox{Pr}\{ (\mathbf{x},\mathbf{z}): \mbox{Pr}(\tilde S(X^n,Z^n)|_\mathbf{z}|\mathbf{z})< 1-\sqrt{\epsilon}  \}\\
\leq \ & \epsilon+\sqrt{\epsilon}\\
\leq \ & 2\sqrt{\epsilon},
\end{align*}
and thus $\mbox{Pr}(S(X^n,Z^n))\geq 1- 2\sqrt{\epsilon}$.
%
%Now consider any $(\mathbf{x},\mathbf{z}) \in S(\mathbf{x},\mathbf{z})$.
%From the definition of  $S(\mathbf{x},\mathbf{z})$,
%$$\mbox{Pr}(S_1(\mathbf{x},\mathbf{y},\mathbf{z})|_{\mathbf{x},\mathbf{z}} |\mathbf{x},\mathbf{z})\geq 1-\sqrt{\epsilon}.$$
%This implies that  $(\mathbf{x},\mathbf{z})$ must
%satisfy the condition \dref{E:s1_2} in the definition of $S_{1}(\mathbf{x},\mathbf{y},\mathbf{z})$, i.e.,
%$$2^{-nB(a_n+\epsilon)}\leq p( f(\mathbf{z}) |\mathbf{x})\leq 2^{-nB(a_n-\epsilon)}.$$
%Otherwise, there does not exist any triple $(\mathbf{x},\mathbf{y},\mathbf{z})$ satisfying the
% definition of $S_{1}(\mathbf{x},\mathbf{y},\mathbf{z})$, and $\mbox{Pr}(S_{1}(\mathbf{x},\mathbf{y},\mathbf{z})|_{\mathbf{x},\mathbf{z}} |\mathbf{x},\mathbf{z})$ would be 0.
%
%Similary, from the definition of  $S(\mathbf{x},\mathbf{z})$, $\mbox{Pr}(S_{1}(\mathbf{x},\mathbf{y},\mathbf{z})|_\mathbf{z}|\mathbf{z})\geq 1-\sqrt{\epsilon}$, and thus
%\begin{align*}
% &\mbox{Pr}( 2^{-nB(b_n+\epsilon)}\leq p(\mathbf{X}|f(\mathbf{z}))\leq 2^{-nB(b_n-\epsilon)},
%   2^{-nB(c_n+\epsilon)}\leq p(\mathbf{X}|\mathbf{z})\leq 2^{-nB(c_n-\epsilon)}  |\mathbf{z} )\\
%\geq\ & \mbox{Pr}(   (\mathbf{X},\mathbf{Y},\mathbf{z}) \in   S_{1}(\mathbf{x},\mathbf{y},\mathbf{z})   |\mathbf{z})\\
%\geq \  &  1-\sqrt{\epsilon} .
%\end{align*}
%This finishes the proof of the lemma.
\end{proof}

\vspace{2mm}
\noindent \underline{Definitions of $\mathcal{Z}_{(\mathbf{x},\mathbf{i})}$ and $S(X^n,I_n)$}
\vspace{2mm}

Define $$\mathcal{Z}_{(\mathbf{x},\mathbf{i})}=\{ \mathbf{z}:  f(\mathbf{z})=\mathbf{i}, (\mathbf{x},\mathbf{z})\in S(X^n,Z^n)   \}$$
and $$S(X^n,I_n)=\{ (\mathbf{x},\mathbf{i}): \mbox{Pr}(\mathcal{Z}_{(\mathbf{x},\mathbf{i})}|\mathbf{x},\mathbf{i})\geq 1-\sqrt[4]{\epsilon} \}.$$

\begin{lemma}\label{L:prob_s(x,i)}
$\mbox{Pr}(S(X^n,I_n))\geq 1-2\sqrt[4]{\epsilon}$ for $B$ sufficiently large.
\end{lemma}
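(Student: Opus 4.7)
The plan is to reduce Lemma~\ref{L:prob_s(x,i)} to a simple conditional-expectation / Markov argument based on the already established bound $\mbox{Pr}(S(X^n,Z^n))\geq 1-2\sqrt{\epsilon}$. The main observation is that, by construction of $\mathcal{Z}_{(\mathbf{x},\mathbf{i})}$, the event ``$\mathbf{Z}\in \mathcal{Z}_{(\mathbf{x},\mathbf{i})}$'' conditional on $(\mathbf{X},\mathbf{I})=(\mathbf{x},\mathbf{i})$ is exactly the event ``$(\mathbf{x},\mathbf{Z})\in S(X^n,Z^n)$'' conditional on the same pair, because the constraint $f(\mathbf{z})=\mathbf{i}$ baked into $\mathcal{Z}_{(\mathbf{x},\mathbf{i})}$ is automatically satisfied when we condition on $\mathbf{I}=f(\mathbf{Z})=\mathbf{i}$. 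Thus the quantity we care about is $Y:=\mbox{Pr}(\mathbf{Z}\in\mathcal{Z}_{(\mathbf{X},\mathbf{I})}\mid \mathbf{X},\mathbf{I})$, a $[0,1]$-valued random variable.

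Next I would compute the unconditional mean of $Y$. By the tower property,
\[
E[Y] \;=\; \mbox{Pr}(\mathbf{Z}\in \mathcal{Z}_{(\mathbf{X},\mathbf{I})}) \;=\; \mbox{Pr}((\mathbf{X},\mathbf{Z})\in S(X^n,Z^n)) \;\geq\; 1-2\sqrt{\epsilon},
\]
where the last inequality is Lemma~\ref{L: properties_S(x,z)} applied for sufficiently large $B$. So $E[1-Y]\leq 2\sqrt{\epsilon}$.

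Applying Markov's inequality to the non-negative random variable $1-Y$ with threshold $\sqrt[4]{\epsilon}$ then yields
\[
\mbox{Pr}\bigl(1-Y \geq \sqrt[4]{\epsilon}\bigr) \;\leq\; \frac{2\sqrt{\epsilon}}{\sqrt[4]{\epsilon}} \;=\; 2\sqrt[4]{\epsilon},
\]
or equivalently
\[
\mbox{Pr}\bigl(Y \geq 1-\sqrt[4]{\epsilon}\bigr) \;\geq\; 1 - 2\sqrt[4]{\epsilon}.
\]
By the definition of $S(X^n,I_n)$, the event on the left is exactly $\{(\mathbf{X},\mathbf{I})\in S(X^n,I_n)\}$, so this gives the claim. (Alternatively one could invoke Lemma~\ref{L:zhanglemma} directly with $C$ the $(\mathbf{x},\mathbf{i})$-space, $D$ the $\mathbf{z}$-space, and $A$ the augmented set $\{((\mathbf{x},\mathbf{i}),\mathbf{z}):(\mathbf{x},\mathbf{z})\in S(X^n,Z^n),\,f(\mathbf{z})=\mathbf{i}\}$, but that route produces a slightly weaker constant $\sqrt{2}\sqrt[4]{\epsilon}$, so the direct Markov bound is cleaner.)

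There is no real obstacle here: the only delicate point is recognizing that conditioning on $\mathbf{I}=\mathbf{i}$ makes the $f(\mathbf{z})=\mathbf{i}$ clause in $\mathcal{Z}_{(\mathbf{x},\mathbf{i})}$ automatic, so that $Y$ really is the conditional probability of $(\mathbf{x},\mathbf{Z})\in S(X^n,Z^n)$. Once this identification is made, the lemma is just a one-line Markov inequality applied to a bound that has already been secured in Lemma~\ref{L: properties_S(x,z)}.
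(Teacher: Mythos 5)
Your proof is correct and is essentially the paper's argument: the paper also observes that $\mathbf{I}=f(\mathbf{Z})$ makes the event $\mathbf{Z}\notin\mathcal{Z}_{(\mathbf{X},\mathbf{I})}$ coincide with $(\mathbf{X},\mathbf{Z})\notin S(X^n,Z^n)$, so its probability is at most $2\sqrt{\epsilon}$, and then splits the expectation over $S(X^n,I_n)$ and its complement, which is exactly your Markov inequality with threshold $\sqrt[4]{\epsilon}$ written out as a sum. No substantive difference.
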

\begin{proof}
For $B$ sufficiently large, consider $\mbox{Pr}( \mathbf{Z} \notin \mathcal{Z}_{(\mathbf{X},\mathbf{I})}     )$.
We have
\begin{align*}
\mbox{Pr}( \mathbf{Z} \notin \mathcal{Z}_{(\mathbf{X},\mathbf{I})}   )=\mbox{Pr}( f(\mathbf{Z})=\mathbf{I},  (\mathbf{X},\mathbf{Z})\notin  S(X^n,Z^n)    )\leq 2 \sqrt{\epsilon}.
\end{align*}

On the other hand,
\begin{align*}
\mbox{Pr}( \mathbf{Z} \notin \mathcal{Z}_{(\mathbf{X},\mathbf{I})}    )
&=\sum_{(\mathbf{x},\mathbf{i})\in S(X^n,I_n)}\mbox{Pr}(\mathbf{Z} \notin \mathcal{Z}_{(\mathbf{x},\mathbf{i})} | \mathbf{x},\mathbf{i}   ) p(\mathbf{x},\mathbf{i})\\
&~~+\sum_{(\mathbf{x},\mathbf{i})\notin S(X^n,I_n)}\mbox{Pr}(\mathbf{Z} \notin \mathcal{Z}_{(\mathbf{x},\mathbf{i})} | \mathbf{x},\mathbf{i}   ) p(\mathbf{x},\mathbf{i})\\
&\geq \sqrt[4]{\epsilon}\cdot \mbox{Pr}( S^c (X^n,I_n)  ).
\end{align*}

Therefore, $\mbox{Pr}( S^c (X^n,I_n)  ) \leq 2\sqrt{\epsilon}/\sqrt[4]{\epsilon}= 2\sqrt[4]{\epsilon}$, and $\mbox{Pr}(S(X^n,I_n))\geq 1-2\sqrt[4]{\epsilon}$.
\end{proof}

%\subsection{Properties of $S(\mathbf{x},\mathbf{i})$ and $\mathcal{Z}_\mathbf{i}(\mathbf{x})$}
%\begin{lemma}\label{L: properties_S(x,i)}
%For any $(\mathbf{x},\mathbf{i}) \in S_2(\mathbf{x},\mathbf{i})$, we have
%\begin{enumerate}
%   \item $$2^{-nB(a_n+\epsilon)}\leq p( \mathbf{i} |\mathbf{x})\leq 2^{-nB(a_n-\epsilon)}$$
%%   and $$2^{-nB(b_n+\epsilon)}\leq p(\mathbf{x}|\mathbf{i})\leq 2^{-nB(b_n-\epsilon)}$$
%%  \item $|\mathcal{Z}_\mathbf{i}(\mathbf{x})|\geq (1-\sqrt[4]{\epsilon})2^{l(h_1-\epsilon)}$,
%and $$\mbox{Pr}(\mathcal{Z}_\mathbf{i}(\mathbf{x})|\mathbf{x})\geq 2^{-nB( a_n+2\epsilon )}, \text{ for sufficiently large }B;$$
%  \item  For any $\mathbf{z}\in \mathcal{Z}_\mathbf{i}(\mathbf{x})$,
%  $$\mbox{Pr}( 2^{-nB(b_n+\epsilon)}\leq p(\mathbf{X}|\mathbf{i})\leq 2^{-nB(b_n-\epsilon)},
%   2^{-nB(c_n+\epsilon)}\leq p(\mathbf{X}|\mathbf{z})\leq 2^{-nB(c_n-\epsilon)}  |\mathbf{z} ) \geq 1-\sqrt{\epsilon}.$$
% \end{enumerate}
%\end{lemma}
\begin{lemma}\label{L: properties_S(x,i)}
For any $(\mathbf{x},\mathbf{i}) \in S(X^n,I_n)$, we have
$$2^{-nB(a_n+\epsilon)}\leq p( \mathbf{i} |\mathbf{x})\leq 2^{-nB(a_n-\epsilon)},$$
%   and $$2^{-nB(b_n+\epsilon)}\leq p(\mathbf{x}|\mathbf{i})\leq 2^{-nB(b_n-\epsilon)}$$
%  \item $|\mathcal{Z}_\mathbf{i}(\mathbf{x})|\geq (1-\sqrt[4]{\epsilon})2^{l(h_1-\epsilon)}$,
and for sufficiently large $B$,
$$\mbox{Pr}(\mathcal{Z}_{(\mathbf{x},\mathbf{i})}|\mathbf{x})\geq 2^{-nB( a_n+2\epsilon )}.$$

\end{lemma}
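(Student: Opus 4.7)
\medskip

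\noindent\textbf{Proof plan for Lemma~\ref{L: properties_S(x,i)}.}

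The plan is to deduce the two claims essentially from the definitions of $S(X^n,I_n)$, $\mathcal{Z}_{(\mathbf{x},\mathbf{i})}$, $S(X^n,Z^n)$ and $\tilde S(X^n,Z^n)$, together with a single conditional-probability identity. First I would observe that the set $\mathcal{Z}_{(\mathbf{x},\mathbf{i})}$ is, by construction, contained in the event $\{f(\mathbf{Z})=\mathbf{i}\}=\{\mathbf{I}=\mathbf{i}\}$. Consequently, for every $(\mathbf{x},\mathbf{i})$ one has the exact factorization
\begin{equation*}
\mathrm{Pr}(\mathbf{Z}\in\mathcal{Z}_{(\mathbf{x},\mathbf{i})}\mid\mathbf{x})
=\mathrm{Pr}(\mathbf{Z}\in\mathcal{Z}_{(\mathbf{x},\mathbf{i})}\mid\mathbf{x},\mathbf{i})\cdot p(\mathbf{i}\mid\mathbf{x}),
\end{equation*}
since conditioning on any $\mathbf{i}'\neq\mathbf{i}$ kills the event. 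This will be the workhorse for the second inequality.

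For the first inequality, the hypothesis $(\mathbf{x},\mathbf{i})\in S(X^n,I_n)$ gives $\mathrm{Pr}(\mathcal{Z}_{(\mathbf{x},\mathbf{i})}\mid\mathbf{x},\mathbf{i})\geq 1-\sqrt[4]{\epsilon}>0$, so in particular $\mathcal{Z}_{(\mathbf{x},\mathbf{i})}\neq\emptyset$. I would then pick any $\mathbf{z}\in\mathcal{Z}_{(\mathbf{x},\mathbf{i})}$; by the definition of $\mathcal{Z}_{(\mathbf{x},\mathbf{i})}$ we have $f(\mathbf{z})=\mathbf{i}$ and $(\mathbf{x},\mathbf{z})\in S(X^n,Z^n)\subseteq\tilde S(X^n,Z^n)$. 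The $\tilde S(X^n,Z^n)$ membership directly yields
\begin{equation*}
2^{-nB(a_n+\epsilon)}\leq p(f(\mathbf{z})\mid\mathbf{x})=p(\mathbf{i}\mid\mathbf{x})\leq 2^{-nB(a_n-\epsilon)},
\end{equation*}
which is exactly the first claim.

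For the second inequality, I would plug the just-established lower bound $p(\mathbf{i}\mid\mathbf{x})\geq 2^{-nB(a_n+\epsilon)}$ and the defining lower bound $\mathrm{Pr}(\mathcal{Z}_{(\mathbf{x},\mathbf{i})}\mid\mathbf{x},\mathbf{i})\geq 1-\sqrt[4]{\epsilon}$ into the factorization identity, obtaining
\begin{equation*}
\mathrm{Pr}(\mathcal{Z}_{(\mathbf{x},\mathbf{i})}\mid\mathbf{x})\;\geq\;(1-\sqrt[4]{\epsilon})\cdot 2^{-nB(a_n+\epsilon)}.
\end{equation*}
Since $\epsilon$ is a fixed positive constant, the factor $(1-\sqrt[4]{\epsilon})$ is a fixed positive constant as well, so for all $B$ with $nB\epsilon\geq\log\frac{1}{1-\sqrt[4]{\epsilon}}$ we get $(1-\sqrt[4]{\epsilon})\geq 2^{-nB\epsilon}$, and the bound collapses to $2^{-nB(a_n+2\epsilon)}$, as required.

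The argument is genuinely routine given the nested high-probability sets already constructed, so I do not expect a serious obstacle; the only non-mechanical step is the initial remark that $\mathcal{Z}_{(\mathbf{x},\mathbf{i})}\subseteq\{f(\mathbf{Z})=\mathbf{i}\}$, which makes the Bayes-type decomposition of $\mathrm{Pr}(\mathcal{Z}_{(\mathbf{x},\mathbf{i})}\mid\mathbf{x})$ collapse to a single term and links the two claims of the lemma.
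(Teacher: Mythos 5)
Your proposal is correct and follows essentially the same route as the paper: nonemptiness of $\mathcal{Z}_{(\mathbf{x},\mathbf{i})}$ from $\mbox{Pr}(\mathcal{Z}_{(\mathbf{x},\mathbf{i})}|\mathbf{x},\mathbf{i})\geq 1-\sqrt[4]{\epsilon}$, picking a $\mathbf{z}\in\mathcal{Z}_{(\mathbf{x},\mathbf{i})}$ to read off the bounds on $p(\mathbf{i}|\mathbf{x})$ from $\tilde S(X^n,Z^n)$, and then multiplying the two lower bounds. Your explicit remark that $\mathcal{Z}_{(\mathbf{x},\mathbf{i})}\subseteq\{f(\mathbf{Z})=\mathbf{i}\}$ is exactly what makes the paper's Bayes-type identity collapse (its denominator equals $1$), so the two arguments coincide.
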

\begin{proof}
Consider any $(\mathbf{x},\mathbf{i}) \in S(X^n,I_n)$. From the definition of $S(X^n,I_n)$,
$\mbox{Pr}(\mathcal{Z}_{(\mathbf{x},\mathbf{i})}|\mathbf{x},\mathbf{i})\geq 1-\sqrt[4]{\epsilon}$. Therefore,
$\mathcal{Z}_{(\mathbf{x},\mathbf{i})}$ must be nonempty, i.e., there exists at least one $\mathbf{z}\in \mathcal{Z}_{(\mathbf{x},\mathbf{i})}$.

Consider any $\mathbf{z}\in \mathcal{Z}_{(\mathbf{x},\mathbf{i})}$. By the definition of $\mathcal{Z}_{(\mathbf{x},\mathbf{i})}$,
 we have $f(\mathbf{z})=\mathbf{i}$ and $(\mathbf{x},\mathbf{z})\in S(X^n,Z^n)\subseteq \tilde S(X^n,Z^n)$.  Then, it follows from the definition of $\tilde S(X^n,Z^n)$ that
$$2^{-nB(a_n+\epsilon)}\leq p( f(\mathbf{z}) |\mathbf{x})\leq 2^{-nB(a_n-\epsilon)},$$ i.e.,  $$2^{-nB(a_n+\epsilon)}\leq p( \mathbf{i}|\mathbf{x})\leq 2^{-nB(a_n-\epsilon)}.$$
Furthermore,
\begin{align*}
 \mbox{Pr}(\mathbf{Z}\in \mathcal{Z}_{(\mathbf{x},\mathbf{i})}|\mathbf{x})
=\ &\frac{\mbox{Pr}(f(\mathbf{Z})=\mathbf{i}|\mathbf{x})\mbox{Pr}(\mathbf{Z}\in \mathcal{Z}_{(\mathbf{x},\mathbf{i})}|\mathbf{x},f(\mathbf{Z})=\mathbf{i})}
             {\mbox{Pr}(f(\mathbf{Z})=\mathbf{i}|\mathbf{Z}\in \mathcal{Z}_{(\mathbf{x},\mathbf{i})},\mathbf{x})}\\
=\ &p( \mathbf{i}|\mathbf{x}) \mbox{Pr}(\mathcal{Z}_{(\mathbf{x},\mathbf{i})}|\mathbf{x}, \mathbf{i})\\
\geq \ & 2^{-nB(a_n+\epsilon)} (1-\sqrt[4]{\epsilon})\\
\geq\ &2^{-nB(a_n+2\epsilon)}
\end{align*}
for sufficiently large $B$. This finishes the proof of the lemma.
\end{proof}

\subsection{Blowing Up $\mathcal{Z}_{(\mathbf{x},\mathbf{i})}$}

\begin{lemma}\label{L:Blown-Up}
For any $(\mathbf{x},\mathbf{i}) \in S(X^n,I_n)$, consider the following blown-up set of $\mathcal{Z}_{(\mathbf{x},\mathbf{i})}$:
\begin{align*}
&\Gamma_{\sqrt{nB} (\sqrt{2 N a_n}+3\sqrt{N\epsilon}) }(\mathcal{Z}_{(\mathbf{x},\mathbf{i})}) =\{\underline{\mathbf{\omega}}\in \mathbb R^{nB}: \exists \  \underline{\mathbf{\omega}}'\in \mathcal{Z}_{(\mathbf{x},\mathbf{i})} \\
&~~~~~~~~~~~~~~~~~~~~\text{~~s.t.~~} d(\underline{\mathbf{\omega}},\underline{\mathbf{\omega}}')\leq \sqrt{nB} (\sqrt{2 N a_n}+3\sqrt{N\epsilon}) \}.
\end{align*}
We have
\begin{enumerate}
    \item $\mbox{Pr}(\mathbf{Y}\in \Gamma_{\sqrt{nB}  (\sqrt{2Na_n}+3\sqrt{N\epsilon}) }(\mathcal{Z}_{(\mathbf{x},\mathbf{i})}) |\mathbf{x}) \geq 1-\epsilon$ for sufficiently large $B$;
    \item For any $\mathbf{y}\in \Gamma_{\sqrt{nB}  (\sqrt{2Na_n}+3\sqrt{N\epsilon})  }(\mathcal{Z}_{(\mathbf{x},\mathbf{i})}) $,
$$f(\mathbf{y}|\mathbf{i})\geq 2^{-nB(b_n-c_n + \frac{1}{2}\log 2\pi eN +  (a_n+\sqrt{2a_n})\log e   + \epsilon')} $$
where $\epsilon' \to 0$ as $\epsilon \to 0$ and $B\to \infty$.
\end{enumerate}
\end{lemma}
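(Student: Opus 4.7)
The plan is to prove the two parts of the lemma separately, with part (1) following directly from the Gaussian concentration result (Lemma \ref{L:Talagrand}) applied to the conditional distribution of $\mathbf{Z}$ given $\mathbf{x}$, and part (2) requiring a counting-plus-triangle-inequality chain along the lines of the proof sketch.

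For part (1), I would start from Lemma \ref{L: properties_S(x,i)}, which tells us that for any $(\mathbf{x},\mathbf{i})\in S(X^n,I_n)$, $\mbox{Pr}(\mathbf{Z}\in\mathcal{Z}_{(\mathbf{x},\mathbf{i})}|\mathbf{x})\geq 2^{-nB(a_n+2\epsilon)}$. Conditioned on $\mathbf{x}$, the vector $\mathbf{Z}-\mathbf{x}$ is i.i.d.\ $\mathcal{N}(0,N)$ in $\mathbb{R}^{nB}$, so applying Lemma \ref{L:Talagrand} to the translated set $\mathcal{Z}_{(\mathbf{x},\mathbf{i})}-\mathbf{x}$ with dimension $nB$, exponent $a_n+2\epsilon$, and deviation parameter $r$ of order $\sqrt{N\epsilon}$ produces $\mbox{Pr}(\mathbf{Z}\in \Gamma_{\sqrt{nB}(\sqrt{2Na_n}+3\sqrt{N\epsilon})}(\mathcal{Z}_{(\mathbf{x},\mathbf{i})})|\mathbf{x})\geq 1-\epsilon$ for sufficiently large $B$. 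Since the channel is symmetric, $\mathbf{Y}|\mathbf{x}$ has the same distribution as $\mathbf{Z}|\mathbf{x}$, and the same bound transfers to $\mathbf{Y}$.

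For part (2), fix any $\mathbf{y}$ in the blown-up set so that some $\mathbf{z}'\in\mathcal{Z}_{(\mathbf{x},\mathbf{i})}$ satisfies $d(\mathbf{y},\mathbf{z}')\leq\sqrt{nB}(\sqrt{2Na_n}+3\sqrt{N\epsilon})$. By definition $f(\mathbf{z}')=\mathbf{i}$ and $(\mathbf{x},\mathbf{z}')\in S(X^n,Z^n)$, and the latter ensures that the slice $\tilde S(X^n,Z^n)|_{\mathbf{z}'}$ has conditional probability at least $1-\sqrt{\epsilon}$ under $\mathbf{X}|\mathbf{z}'$. Combined with the uniform upper bound $p(\mathbf{x}''|\mathbf{z}')\leq 2^{-nB(c_n-\epsilon)}$ valid on $\tilde S$, this forces the slice to have cardinality at least $2^{nB(c_n-2\epsilon)}$ for large $B$. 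Each $\mathbf{x}''$ in the slice also satisfies $d(\mathbf{x}'',\mathbf{z}')\leq\sqrt{nB}(\sqrt{N}+\epsilon)$ and $p(\mathbf{x}''|\mathbf{i})\geq 2^{-nB(b_n+\epsilon)}$, so the triangle inequality gives $d(\mathbf{x}'',\mathbf{y})\leq\sqrt{nB}(\sqrt{N}+\sqrt{2Na_n})+O(\sqrt{nB\epsilon})$. Expanding $(\sqrt{N}+\sqrt{2Na_n})^{2}=N+2N\sqrt{2a_n}+2Na_n$ and substituting into the Gaussian density of $\mathbf{x}''+\mathbf{W}_2$ at $\mathbf{y}$ yields $f(\mathbf{y}|\mathbf{x}'')\geq 2^{-nB(\frac{1}{2}\log 2\pi eN+a_n+\sqrt{2a_n}\log e+o(1))}$. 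Restricting
\[
f(\mathbf{y}|\mathbf{i})=\sum_{\mathbf{x}''}f(\mathbf{y}|\mathbf{x}'')\,p(\mathbf{x}''|\mathbf{i})
\]
to the slice, and multiplying the cardinality bound with the per-term density and conditional probability lower bounds, delivers the claimed estimate with $\epsilon'$ of order $\epsilon$, vanishing as $\epsilon\to 0$ and $B\to\infty$.

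The hard step to get right is the quantitative assembly in part (2): the Talagrand radius $\sqrt{nB}\sqrt{2Na_n}$, the typical $X$--$Z$ distance $\sqrt{nB}\sqrt{N}$, and the three exponents $a_n,b_n,c_n$ must combine cleanly through the triangle inequality and the Gaussian density. The identity $(\sqrt{N}+\sqrt{2Na_n})^{2}=N(1+2\sqrt{2a_n}+2a_n)$ is exactly what produces the extra $\sqrt{2a_n}\log e$ cross term beyond the expected $\frac{1}{2}\log 2\pi eN+a_n\log e$, and tracking the various $\epsilon$-slacks introduced by the nested definitions $\tilde S\supseteq S\supseteq S(X^n,I_n)$ so that they collapse into a single $\epsilon'\to 0$ error term is the main bookkeeping burden.
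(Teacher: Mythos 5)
Your proposal follows essentially the same route as the paper's proof: part (1) is obtained exactly as in the paper by feeding the bound $\mbox{Pr}(\mathcal{Z}_{(\mathbf{x},\mathbf{i})}|\mathbf{x})\geq 2^{-nB(a_n+2\epsilon)}$ from Lemma \ref{L: properties_S(x,i)} into Lemma \ref{L:Talagrand} (after centering at $\mathbf{x}$) and invoking the symmetry $N_1=N_2$ to replace $\mathbf{Z}$ by $\mathbf{Y}$, and part (2) is the same triangle-inequality argument over the slice $\tilde S(X^n,Z^n)|_{\mathbf{z}}$, whose cardinality is lower bounded by $\approx 2^{nB(c_n-\epsilon)}$ via the uniform bound $p(\mathbf{x}|\mathbf{z})\leq 2^{-nB(c_n-\epsilon)}$, combined with $p(\mathbf{x}|\mathbf{i})\geq 2^{-nB(b_n+\epsilon)}$ and the Gaussian density evaluated at distance $\sqrt{nB}(\sqrt{N}+\sqrt{2Na_n\ln 2})+O(\sqrt{nB\epsilon})$. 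The only caveat is bookkeeping of the $\ln 2$ factors (bits versus nats): with Lemma \ref{L:Talagrand} as stated the blow-up radius is $\sqrt{2Na_n\ln 2}$ and the resulting exponent term is $a_n+\sqrt{2a_n\ln 2}\log e$, as in the paper's proof and in Lemma \ref{L:Keylemma}, rather than the $\ln 2$-free form you carried over from the lemma statement.
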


\begin{proof} From Lemma \ref{L: properties_S(x,i)}, for any $(\mathbf{x},\mathbf{i}) \in S(X^n,I_n)$ and sufficiently large $B$,
$$\mbox{Pr}(\mathbf{Z} \in \mathcal{Z}_{(\mathbf{x},\mathbf{i})}|\mathbf{x})\geq 2^{-nB( a_n+2\epsilon )},$$
i.e.,
\begin{align*}
\mbox{Pr}(\mathbf x +\mathbf W_1    \in \mathcal{Z}_{(\mathbf{x},\mathbf{i})}|\mathbf{x})& = \mbox{Pr}( \mathbf W_1    \in    \{ \underline{\omega} -\mathbf x: \underline{\omega} \in  \mathcal{Z}_{(\mathbf{x},\mathbf{i})} \} ) \\
&   \geq 2^{-nB( a_n+2\epsilon )}.
\end{align*}

Therefore, we have
\begin{align}
&\mbox{Pr}(\mathbf{Y}\in \Gamma_{\sqrt{nB}  (\sqrt{2Na_n\ln2}+3\sqrt{N\epsilon}) }(\mathcal{Z}_{(\mathbf{x},\mathbf{i})}) |\mathbf{x}) \nonumber \\
=\ & \mbox{Pr}(\mathbf x +\mathbf W_2    \in   \Gamma_{\sqrt{nB}  (\sqrt{2Na_n\ln2}+3\sqrt{N\epsilon}) }(\mathcal{Z}_{(\mathbf{x},\mathbf{i})}) |\mathbf{x})\nonumber \\
=\ &\mbox{Pr}( \mathbf W_2    \in    \{ \underline{\omega} -\mathbf x: \underline{\omega} \in \Gamma_{\sqrt{nB} (\sqrt{2Na_n\ln2}+3\sqrt{N\epsilon}) }(\mathcal{Z}_{(\mathbf{x},\mathbf{i})}) \}   ) \nonumber \\
=\ &\mbox{Pr}( \mathbf W_1    \in    \{ \underline{\omega} -\mathbf x: \underline{\omega} \in \Gamma_{\sqrt{nB} (\sqrt{2Na_n\ln2}+3\sqrt{N\epsilon}) }(\mathcal{Z}_{(\mathbf{x},\mathbf{i})})) \}   ) \nonumber \\
=\ &\mbox{Pr}( \mathbf W_1    \in   \Gamma_{\sqrt{nB} (\sqrt{2Na_n\ln2}+3\sqrt{N\epsilon}) }(  \{ \underline{\omega} -\mathbf x: \underline{\omega} \in \mathcal{Z}_{(\mathbf{x},\mathbf{i})}\} ) ) \nonumber \\
\geq \ &\mbox{Pr}( \mathbf W_1    \in   \Gamma_{\sqrt{nB} (\sqrt{2Na_n\ln2+ 4N\epsilon\ln2}  +\sqrt{N\epsilon}      ) }(  \{ \underline{\omega} -\mathbf x: \underline{\omega} \in \mathcal{Z}_{(\mathbf{x},\mathbf{i})} \} ) ) \nonumber \\
\geq \ & 1-2^{-\frac{nB\epsilon}{2}} \label{E:followfromtalagrand}\\
\geq \ &1-\epsilon \nonumber
\end{align}
for sufficiently large $B$, where \dref{E:followfromtalagrand} follows from Lemma \ref{L:Talagrand}.

To prove Part 2), consider any $\mathbf{y}\in \Gamma_{\sqrt{nB}  (\sqrt{2Na_n\ln2}+3\sqrt{N\epsilon}) }(\mathcal{Z}_{(\mathbf{x},\mathbf{i})})$. We can find one
$\mathbf{z}\in  \mathcal{Z}_{(\mathbf{x},\mathbf{i})}$ such that $d (\mathbf{y},\mathbf{z})\leq \sqrt{nB}  (\sqrt{2Na_n\ln2}+3\sqrt{N\epsilon})$, and for this
$\mathbf{z}$, we have from the definition of $\mathcal{Z}_{(\mathbf{x},\mathbf{i})}$ that: i) $f(\mathbf{z})=\mathbf i$ and ii) $\mbox{Pr}(\tilde S(X^n,Z^n)|_\mathbf{z}|\mathbf{z})\geq 1-\sqrt{\epsilon}$, where
\begin{align*}
\tilde S(X^n,Z^n)|_\mathbf{z}=\Big\{ \mathbf{x}: & ~d(\mathbf x, \mathbf z) \in [\sqrt{nB}(\sqrt{N}-\epsilon),  \sqrt{nB}(\sqrt{N}+\epsilon)    ]\\
 &~2^{-nB(a_n+\epsilon)}\leq p( f(\mathbf{z})|\mathbf{x})\leq 2^{-nB(a_n-\epsilon)}\\
 &~2^{-nB(b_n+\epsilon)}\leq p(\mathbf{x}|f(\mathbf{z}))\leq 2^{-nB(b_n-\epsilon)}  \\
  &~2^{-nB(c_n+\epsilon)}\leq p(\mathbf{x}|\mathbf{z})\leq 2^{-nB(c_n-\epsilon)} \Big\}.
\end{align*}
The size of $\tilde S(X^n,Z^n)|_\mathbf{z}$ can be lower bounded by considering the following
\begin{align*}
1-\sqrt{\epsilon} &\leq \mbox{Pr}(\tilde S(X^n,Z^n)|_\mathbf{z}|\mathbf{z})\\
&= \sum_{\mathbf x \in \tilde S(X^n,Z^n)|_\mathbf{z}} p(\mathbf x|\mathbf z)\\
&\leq 2^{-nB(c_n-\epsilon)} \big|\tilde S(X^n,Z^n)|_\mathbf{z}\big|,
\end{align*}
i.e.,
\begin{align*}\big|\tilde S(X^n,Z^n)|_\mathbf{z}\big| \geq (1-\sqrt{\epsilon} )2^{nB(c_n-\epsilon)}.
\end{align*}

Then,
\begin{align}
f(\mathbf{y}|\mathbf{i})&=\sum_{\mathbf x} f(\mathbf{y}|\mathbf{x})p(\mathbf{x}|\mathbf{i})\nonumber \\
&\geq \sum_{\mathbf x \in \tilde S(X^n,Z^n)|_\mathbf{z}} f(\mathbf{y}|\mathbf{x})p(\mathbf{x}|\mathbf{i})\nonumber \\
&\geq 2^{-nB(b_n+\epsilon)} \sum_{\mathbf x \in \tilde S(X^n,Z^n)|_\mathbf{z}} f(\mathbf{y}|\mathbf{x})\nonumber \\
&\geq 2^{-nB(b_n+\epsilon)} \big|\tilde S(X^n,Z^n)|_\mathbf{z}\big| \min_{\mathbf x \in \tilde S(X^n,Z^n)|_\mathbf{z}} f(\mathbf{y}|\mathbf{x}) \nonumber \\
&\geq  (1-\sqrt{\epsilon} )2^{-nB(b_n+\epsilon)}2^{nB(c_n-\epsilon)} \min_{\mathbf x \in \tilde S(X^n,Z^n)|_\mathbf{z}} f(\mathbf{y}|\mathbf{x}). \label{E:probtobecont}
\end{align}
For any $\mathbf x \in \tilde S(X^n,Z^n)|_\mathbf{z}$, we have
\begin{align*}
d(\mathbf x, \mathbf y)&\leq d(\mathbf x, \mathbf z)+d(\mathbf z, \mathbf y)\\
&\leq \sqrt{nB}(\sqrt{N}+\sqrt{2Na_n\ln2}+\epsilon+3\sqrt{N\epsilon})\\
&=: \sqrt{nB}(\sqrt{N}+\sqrt{2Na_n\ln2}+\epsilon_1)
\end{align*}
and thus,
\begin{align*}
f(\mathbf y| \mathbf x)&=  \frac{1}{ (2\pi N)^{ \frac{nB}{2} } } e^{ -\frac{|| \mathbf y- \mathbf x  ||^2     }{2N}  } \\
&\geq    2^{ -\frac{     nB (\sqrt{N}+\sqrt{2Na_n\ln2}+\epsilon_1)    ^2     }{2N}  \log e  -\frac{nB}{2}\log 2\pi N   } \\
&=    2^{ -nB \left(   \frac{      (\sqrt{N}+\sqrt{2Na_n\ln2}+\epsilon_1)    ^2     }{2N}  \log e + \frac{1}{2}\log 2\pi N  \right) } \\
&=:    2^{ -nB \left( \frac{1}{2}\log 2\pi e N +a_n+\sqrt{2a_n\ln2}\log e   +\epsilon_2  \right) }
\end{align*}
where $\epsilon_1, \epsilon_2 \to 0$ as $\epsilon \to 0$. Plugging this into \dref{E:probtobecont} yields that
\begin{align*}
f(\mathbf y|\mathbf i)&\geq(1-\sqrt{\epsilon} )2^{-nB(b_n+\epsilon)}2^{nB(c_n-\epsilon)}\\
&~~~~\times 2^{ -nB \left( \frac{1}{2}\log 2\pi e N +a_n+\sqrt{2a_n\ln2}\log e    +\epsilon_2  \right) }  \\
&\geq 2^{-nB(b_n-c_n + \frac{1}{2}\log 2\pi e N+a_n+\sqrt{2a_n\ln2}\log e  + \epsilon_3)}
\end{align*}
for some $\epsilon_3 \to 0$ as $\epsilon \to 0$.
\end{proof}

\subsection{Constructions of $\mathcal I$ and $\mathcal{Y}_\mathbf{i}$}
Let $\mathcal I=\{\mathbf{i}: \mbox{Pr}(S(X^n,I_n)|_\mathbf{i}|\mathbf{i}) \geq 1-2\sqrt[8]{\epsilon} \}$.  For sufficiently large $B$, $\mbox{Pr}(S(X^n,I_n))\geq 1-2\sqrt[4]{\epsilon}$
from Lemma \ref{L:prob_s(x,i)}, and thus by Lemma \ref{L:zhanglemma} again,
\begin{align*}
\mbox{Pr}(\mathcal I)&\geq  \mbox{Pr}\left\{\mathbf{i}: \mbox{Pr}(S(X^n,I_n)|_\mathbf{i}|\mathbf{i}) \geq 1-   \sqrt{ 2\sqrt[4]{\epsilon}} \right\}\\
&\geq 1-   \sqrt{ 2\sqrt[4]{\epsilon}} \\
&\geq 1-2\sqrt[8]{\epsilon}.
\end{align*}
\begin{lemma}\label{L:AppendixY_i}
For any $\mathbf{i} \in \mathcal I$, let
\begin{align*}
 \mathcal{Y}_\mathbf{i}&:=\bigcup_{\mathbf{x} \in S(X^n,I_n)|_\mathbf{i}} \Gamma_{\sqrt{nB} (\sqrt{2Na_n\ln2}+3\sqrt{N\epsilon})  }(\mathcal{Z}_{(\mathbf{x},\mathbf{i})}).
 \end{align*}
Then for sufficiently large $B$,
$$\mbox{Pr}(\mathbf{Y} \in \mathcal{Y}_\mathbf{i} |\mathbf{i})\geq 1-3\sqrt[8]{\epsilon},$$
and for each $\mathbf y \in \mathcal{Y}_\mathbf{i}$,
\begin{align*}
f(\mathbf{y}|\mathbf{i})\geq 2^{-nB(b_n-c_n + \frac{1}{2}\log 2\pi eN +  a_n+\sqrt{2a_n\ln2}\log e    + \epsilon_3)}. \end{align*}
\end{lemma}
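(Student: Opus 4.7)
The plan is to assemble both claims directly from Lemma~\ref{L:Blown-Up} together with the definition of $\mathcal I$; this lemma is essentially a packaging step that collects the estimates already proved, so no genuinely new analytical input is required. The density bound is immediate: by the definition of $\mathcal{Y}_\mathbf{i}$, any $\mathbf{y} \in \mathcal{Y}_\mathbf{i}$ lies in $\Gamma_{\sqrt{nB}(\sqrt{2Na_n\ln 2}+3\sqrt{N\epsilon})}(\mathcal{Z}_{(\mathbf{x},\mathbf{i})})$ for at least one $\mathbf{x} \in S(X^n,I_n)|_\mathbf{i}$, and Lemma~\ref{L:Blown-Up} applied to this pair $(\mathbf{x},\mathbf{i})\in S(X^n,I_n)$ yields exactly the stated lower bound on $f(\mathbf{y}|\mathbf{i})$. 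The bound does not depend on which $\mathbf{x}$ we picked, so it holds throughout $\mathcal{Y}_\mathbf{i}$.

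For the probability bound, I would condition on $\mathbf{X}$ and restrict the resulting sum to the high-probability slice $S(X^n,I_n)|_\mathbf{i}$:
\[
\mbox{Pr}(\mathbf{Y} \in \mathcal{Y}_\mathbf{i} | \mathbf{i}) \geq \sum_{\mathbf{x} \in S(X^n,I_n)|_\mathbf{i}} \mbox{Pr}(\mathbf{Y} \in \mathcal{Y}_\mathbf{i} | \mathbf{x},\mathbf{i})\, p(\mathbf{x}|\mathbf{i}).
\]
For each such $\mathbf{x}$ the blown-up set $\Gamma_{\sqrt{nB}(\sqrt{2Na_n\ln 2}+3\sqrt{N\epsilon})}(\mathcal{Z}_{(\mathbf{x},\mathbf{i})})$ sits inside $\mathcal{Y}_\mathbf{i}$ by construction, so each summand is at least $\mbox{Pr}(\mathbf{Y} \in \Gamma(\mathcal{Z}_{(\mathbf{x},\mathbf{i})}) | \mathbf{x},\mathbf{i})$; since $\mathbf{I}=f(\mathbf{Z})$ and $\mathbf{Y}$ is conditionally independent of $\mathbf{Z}$ given $\mathbf{X}$ (because $\mathbf{W_1}$ and $\mathbf{W_2}$ are independent), the conditioning on $\mathbf{i}$ may be dropped, and the resulting probability is bounded below by $1-\epsilon$ via the first part of Lemma~\ref{L:Blown-Up}. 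Combining with $\mbox{Pr}(S(X^n,I_n)|_\mathbf{i} | \mathbf{i})\geq 1-2\sqrt[8]{\epsilon}$, which is the very definition of $\mathcal I$, gives $(1-\epsilon)(1-2\sqrt[8]{\epsilon}) \geq 1-3\sqrt[8]{\epsilon}$ for $\epsilon$ small and $B$ large.

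The only mildly delicate point is the conditional-independence step used to delete $\mathbf{i}$ from $\mbox{Pr}(\mathbf{Y} \in \Gamma(\mathcal{Z}_{(\mathbf{x},\mathbf{i})}) | \mathbf{x},\mathbf{i})$: although the set $\Gamma(\mathcal{Z}_{(\mathbf{x},\mathbf{i})})$ does depend on $\mathbf{i}$, once $\mathbf{i}$ is fixed it is simply a deterministic subset of $\mathbb R^{nB}$, after which the Markov structure $\mathbf{I}=f(\mathbf{Z})$ together with $\mathbf{Y} \perp \mathbf{Z} \mid \mathbf{X}$ renders the extra conditioning redundant. Apart from this observation, all the heavy lifting---Gaussian measure concentration and the lower bound on the density inside the blown-up set---has already been carried out in Lemma~\ref{L:Blown-Up}, so I do not anticipate any further obstacle.
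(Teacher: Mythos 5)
Your proposal is correct and follows essentially the same route as the paper: lower-bound $\mbox{Pr}(\mathbf{Y}\in\mathcal{Y}_\mathbf{i}|\mathbf{i})$ by restricting the sum over $\mathbf{x}$ to $S(X^n,I_n)|_\mathbf{i}$, use the inclusion of each blown-up set in $\mathcal{Y}_\mathbf{i}$ together with Part 1) of Lemma~\ref{L:Blown-Up} and the definition of $\mathcal I$, and obtain the density bound for any $\mathbf{y}\in\mathcal{Y}_\mathbf{i}$ directly from Part 2) of Lemma~\ref{L:Blown-Up}. Your explicit justification for dropping the conditioning on $\mathbf{i}$ via $\mathbf{Y}\perp\mathbf{Z}\mid\mathbf{X}$ is a step the paper uses implicitly, and it is valid.
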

\begin{proof}
For any $\mathbf{i} \in \mathcal I$ and sufficiently large $B$, we have
\begin{align*}
&\mbox{Pr}(\mathbf{Y} \in \mathcal{Y}_\mathbf{i} |\mathbf{i})\\
=\ &\sum_{\mathbf{x}}\mbox{Pr}(\mathbf{Y} \in \mathcal{Y}_\mathbf{i} |\mathbf{x}) p(\mathbf{x}|\mathbf{i})\\
\geq \ &\sum_{\mathbf{x}\in S(X^n,I_n)|_\mathbf{i}}\mbox{Pr}(\mathbf{Y} \in \mathcal{Y}_\mathbf{i} |\mathbf{x}) p(\mathbf{x}|\mathbf{i})\\
\geq \ & \sum_{\mathbf{x}\in S(X^n,I_n)|_\mathbf{i}}\mbox{Pr}(\mathbf{Y} \in \Gamma_{\sqrt{nB} (\sqrt{2Na_n\ln2}+3\sqrt{N\epsilon})  }(\mathcal{Z}_{(\mathbf{x},\mathbf{i})}) |\mathbf{x})  p(\mathbf{x}|\mathbf{i})\\
\geq \ & (1-\epsilon) \mbox{Pr}(S(X^n,I_n)|_\mathbf{i}   |\mathbf i)\\
\geq \ &(1-\epsilon) (1-2\sqrt[8]{\epsilon})\\
\geq \ &1-3\sqrt[8]{\epsilon}.
\end{align*}

Now consider any $\mathbf y \in \mathcal{Y}_\mathbf{i}$. There exists some $\mathbf{x} \in S(X^n,I_n)|_\mathbf{i}$ such that
$\mathbf y \in \Gamma_{\sqrt{nB} (\sqrt{2Na_n\ln2}+3\sqrt{N\epsilon})  }(\mathcal{Z}_{(\mathbf{x},\mathbf{i})})$. It then follows immediately from Part 2) of Lemma \ref{L:Blown-Up} that
\begin{align*}
 f(\mathbf{y}|\mathbf{i})\geq 2^{-nB(b_n-c_n + \frac{1}{2}\log 2\pi eN + a_n+\sqrt{2a_n\ln2}\log e    + \epsilon_3)} . \end{align*}
\end{proof}

Finally, choosing $\delta$ to be $3\sqrt[8]{\epsilon}$ completes the proof of Lemma \ref{L:Keylemma}.

\end{document}